\documentclass[10pt,journal,compsoc]{IEEEtran}
\usepackage{hyperref}
\usepackage[caption=false]{subfig}
\usepackage{cite}
\usepackage{amsmath,amsthm,amssymb}
\usepackage{xparse}
\usepackage{url}
\usepackage{wrapfig}

\usepackage{graphicx}
\graphicspath{{./pdf/}}

\usepackage{tikz}
\usetikzlibrary{arrows,intersections}
\newtheorem{remark}{Remark}
\newtheorem{example}{Example}
\newtheorem{definition}{Definition}
\newtheorem{proposition}{Proposition}
\newtheorem{theorem}{Theorem}

\newtheorem{lemma}{Lemma}

\renewcommand{\Gamma}{\varGamma}
\renewcommand{\epsilon}{\varepsilon}
\newcommand{\etal}{\textit{et al}. }
\newcommand{\ie}{\textit{i}.\textit{e}. }
\newcommand{\eg}{\textit{e}.\textit{g}.}
\newcommand{\cf}{\textit{c}.\textit{f}. }
\newcommand{\DK}{\textsc{DK }}
\newcommand{\E}{\mathbf{E}}
\renewcommand{\Pr}{\mathbf{P}}
\newcommand{\Ubar}{\bar{u}}
\newcommand{\Umax}{u^{max}}
\newcommand{\I}{\mathcal{I}}

\begin{document}

	\title{Response time stochastic analysis for fixed-priority stable real-time systems}

	\markboth{~IEEE Transactions on Computers}%
	{Shell \MakeLowercase{\textit{et al.}}: Bare Demo of IEEEtran.cls for Computer Society Journals}
	
	\IEEEtitleabstractindextext{%
		\begin{abstract}
			In this paper, we prove that a mean system utilization smaller than one is a necessary condition for the feasibility of real-time systems. Such systems are defined as \textit{stable}. Stable systems have two distinct states: a transient state, followed by a steady-state where the same distribution of response times is repeated infinitely for each task. We prove that the Liu and Layland theorem holds for stable probabilistic real-time systems with implicit deadlines, we provide an analytical approximation of response times for each of those two states and a bound of the instant when a real-time system becomes steady.
		\end{abstract}
		
	\begin{IEEEkeywords}
		Real-time systems, Scheduling, Stochastic analysis.
	\end{IEEEkeywords}}
	
	\author{\IEEEauthorblockN{Kevin~Zagalo\textsuperscript{$\ast$}, Yasmina~Abdeddaïm\textsuperscript{$\ast \dagger$}, Avner~Bar-Hen\textsuperscript{$\ast \ddagger$}, Liliana~Cucu-Grosjean\textsuperscript{$\ast$}} \\
		\IEEEauthorblockA{\textit{\textsuperscript{$\ast$}Inria~Paris, \textsuperscript{$\dagger$}Université~Gustave~Eiffel, LIGM, CNRS, \textsuperscript{$\ddagger$}Cnam~Paris}}}%
	
	\maketitle
	
	\IEEEraisesectionheading{\section{Introduction} \label{section:introduction}}
	
	\IEEEPARstart{P}{robabilistic} methods applied to real-time systems are motivated by the fact that worst-case execution times and minimal inter-arrival times are often rare events. The analysis induced by these methods consists in quantifying the probability that time requirements may not be satisfied. Their purpose is to soften real-time guarantees leading to an overestimated use of processors, or as we say, provide a pessimistic analysis~\cite{diaz2004pessimism}. Associating probabilities to worst-case values and to other possible cases may reduce the pessimism of a real-time analysis. Often, execution times have associated probabilities, but inter-arrival times may also be considered as probabilistic. In fact, many real-time systems have arrival times subject to noise and other interference from either operating systems \textit{etc.} which can be modeled with probability distributions~\cite{kdbench2022}. We call \textit{probabilistic real-time systems} the real-time systems which have \textit{probabilistic} parameters such as execution times and/or inter-arrival times.
	
	Utilization-based schedulability conditions for single-core fixed-priority preemptive scheduling policies are well known~\cite{davis2016review}. The seminal work of Liu and Layland \cite{liu1973scheduling} introduces a sufficient condition for the feasibility of a real-time system using its \textit{maximal utilization}. Nevertheless, a real-time system not satisfying this sufficient condition remains {\it schedulable} with a given probability (see \eqref{eq:schedulable} in Section~\ref{sec:taskmodel}). Moreover, while  probabilistic methods have been focused towards fitting in this sufficient condition by providing less pessimistic analyses, their domain of feasibility needs to be defined as well. In this paper, we build necessary feasibility conditions for fixed-priority scheduling policies based on the mean utilization of the real-time system. We demonstrate that a mean utilization smaller than $1$ is mandatory for \textit{response times} to be finite, \cf Propositions \ref{prop:feasible} and \ref{prop:LL}. We call systems with a mean utilization smaller than $1$, \textit{stable} real-time systems.

	Response times depend on properties of real-time systems such as the scheduling policy, the preemptiveness, \textit{etc.}. Our motivation is to exploit those properties leading response times to the domain of a certain probability distribution. 
	In this work, we emphasize the inverse Gaussian distribution as the appropriate distribution for an approximation of response times in the context of fixed-priority scheduling policies, using asymptotic results of queueing theory, \cf Propositions~\ref{prop:RTdistribution}. We propose two different approximations, a worst-case approximation before the system is in its \textit{steady-state} (\cf Proposition~\ref{prop:worstcase}, see Definition~\ref{def:steady}) and another one when the system is \textit{steady}, \cf Propositions \ref{prop:steadydistribution} and \ref{prop:backlog}. We prove in Proposition~\ref{prop:idleissteady} that a real-time system is in its steady-state at the first idle time.
	
	The contributions of this paper are:

	\begin{itemize}
		\item a necessary condition for the feasibility of probabilistic real-time systems with implicit deadlines under preemptive fixed-priority scheduling policies,
		\item an approximation of the demand of probabilistic real-time systems,
		\item a first proof that the Liu and Layland theorem \cite{liu1973scheduling} stands only for stable probabilistic real-time systems,
		\item a first proof that a real-time system is steady after the first idle time and an analytical formulation of its distribution,
		\item an analytical formulation of response time distributions of tasks before and after the first idle time,
		\item numerical evidence and comparison against simulated results and other existing estimations. 
	\end{itemize} 
	
	\begin{table*}[t]
	    \centering
	    \caption{Notations of this paper}
	    \begin{tabular}{cl|cl|cl}
            $\Gamma$ & task set & $\Pr, \E$ & probability and expectation & $C_{k,l}$ & execution time of $\tau_{k,l}$ \\
            $\tau_k$ & $k$-th task  &$\tau_{k,l}$ & $l$-th job of $\tau_k$ &$A_{k,l}$ & arrival time of $\tau_{k,l}$  \\
             $C_k$ & execution time of $\tau_k$ &$dF_k$ & execution time distribution of $\tau_k$ & $T_{k,l}$ & inter-arrival time between $\tau_{k,l-1}$ and $\tau_{k,l}$ \\
             $T_k$ & inter-arrival time of $\tau_k$ &  $m_k$ & mean execution time of $\tau_k$ & $R_{k,l}$ & response time of $\tau_{k,l}$ \\
             $D_k$ & relative deadline of $\tau_k$ & $dG_k$ & inter-arrival times distribution & $\hat{R}_{k,l}$ & heavy-traffic response time of $\tau_{k,l}$\\
             $N_k$ & counting process of jobs of $\tau_k$ & $\lambda_k$ & rate of arrivals of $\tau_k$ & $\tilde{R}_k$ & steady-state response time of $\tau_k$ \\
             $W_k$ & demand process of level $k$ & $\Ubar_k$ & mean utilization of level $k$ &  $\mathcal{I}^x$ & first idle time with backlog $x$ \\
             $\hat{W}_k$ & heavy-traffic demand of level $k$ &$u_k$ & mean utilization of $\tau_k$ & $\mathcal{I}_k^x$ & first idle time of level $k$ with backlog $x$  \\
             $\beta_k$ & backlog process of level $k$& $\Umax_k$ & maximum utilization of level $k$ & $d\mu_k$ & synchronous activation distribution \\
              $\hat{\beta}_k$ & heavy-traffic backlog of level $k$  & $v_k$ & demand deviation of level $k$ & $\tilde{h}_k$ & probability function of $\tilde{R}_k$ \\
             $\tilde{\beta}_k$ & steady-state backlog of level $k$& $\eta_k^{-1}$ & mean of $\tilde{\beta}_k - \tilde{\beta}_{k-1}$  & $h^{max}_k$ & probability function of $\sup_{l \in \mathbb{N}} \hat{R}_{k,l}$\\
             $d\pi_k$ & steady-state backlog distribution & $d\Phi$ & standard normal distribution & $\psi_k^x$ & probability function of $\I^x_k$\\
             $B_k(t)$ & blocking time of level $k$ at $t>0$ & $t_{idle}^{max}(\epsilon)$ & maximum $\epsilon$-idle time & $\psi$ & inverse Gaussian probability function  \\
        \end{tabular}
	    \label{tab:notations}
	\end{table*}
	
    The paper is organized as follows. We present in Section~\ref{sec:related} relevant existing results and in Section~\ref{sec:taskmodel} the main definitions of the considered model. We introduce in Section~\ref{sec:demand} the \textit{Loynes theorem} and the \textit{heavy-traffic theorem} which are the central theorems used this work. These theorems provide analytical expressions of the \textit{steadiness} of the system, through the stochastic analysis of \textit{backlogs}. In Section~\ref{sec:tda} we provide the time demand analysis associated to the probabilistic real-time systems. Moreover, we demonstrate the importance of the first idle time within any feasibility analysis. We detail the experimental results associated to the proposed contribution in Section~\ref{sec:exp} and we discuss future work opened by our current contribution in Section~\ref{sec:future}. Notations are provided in Table~1.

	\section{Related work}\label{sec:related}

    Probabilistic methods for the analysis of response times have many applications in real-time systems \cite{davis2019survey}. Two main directions have been explored: static methods for the exact computation and approximation of response time distributions  \cite{diaz2002stochastic, kim2005exact, maxim2013response, manolache2001memory} \textit{a priori}, and the measurement-based application of the extreme value theory (EVT) \cite{liu2013evt, lu2012statistical} approximating the distribution of the maximum values of response times \textit{a posteriori}. Often, probabilities are considered for execution times and few papers consider probabilistic inter-arrival times and deadlines~\cite{maxim2013response, lehoczky1996real, gaujal2020dynamic}. Moreover, the method introduced in \cite{diaz2002stochastic,kim2005exact} requires a large amount of convolutions which have a high space and time complexity, and the analysis provided by Lehoczky \cite{lehoczky1996real} is not suited to express response time distributions.

	The contributions of this paper are based on queueing theory results \cite{sparaggis1994optimal, sethuraman1999optimal, baccelli2013elements,chen2001fundamentals}. To the best of our knowledge, no result from the queueing theory is focused on general execution times and inter-arrival times, multi-class clients (\ie different tasks) and the quantization of deadline misses of such systems. The only results based on queueing theory for real-time systems have been published within the thread of papers related to \cite{lehoczky1996real}, where the author approximates  the number of simultaneously activated jobs by a \textit{reflected at the origin Brownian motion}. 
	However, the author makes strong hypotheses restricting the model. The exponential distribution of inter-arrivals and execution times suggested in \cite{lehoczky1996real} is a strong hypothesis. Furthermore, his model has another important limitation as it considers systems of jobs of only one task, which does not allow a response time analysis.
	
	\section{Model} \label{sec:taskmodel}
	
	We consider a probabilistic real-time system $\Gamma$ composed of a finite number of tasks ordered in the decreasing priority order, \ie a task $\tau_i$ has a higher priority than $\tau_j$ if and only if $i < j$. The tasks are scheduled on a single-core processor, using a  fixed-priority preemptive policy.

	\subsection{Probabilistic real-time tasks} \label{sec:probatask}
	A probabilistic task $\tau_k$ is characterized by: 

	\begin{enumerate}
		\item its \textit{execution time} $C_k > 0$ , with a distribution function $F_k$, and finite mean $m_k \triangleq \E[C_k]$ and non-negative variance $s_k^2 \triangleq \E[(C_k - m_k)^2]$,
		\item its \textit{inter-arrival time} $T_k > 0$  between two consecutive instances of $\tau_k$, with a distribution function $G_k$ of rate $\lambda_k  \triangleq1/\E[T_k]$,
		\item its \textit{relative implicit deadline} $D_k > 0$, with the same distribution function $G_k$ as its inter-arrival time.
	\end{enumerate}

	Without loss of generality, we consider the functions $F_k$ and $G_k$ discrete. Also note that all the parameters written in capital letters are probabilistic.
	
	\par The $l$-th instance of the task $\tau_k$ is called a \textit{job} and we denote it  $\tau_{k,l}$. Its execution time is $C_{k,l}$, with distribution function $F_k$. 	$T_{k,l}$ is the \textit{inter-arrival time} between the arrival times of the jobs  $\tau_{k,l-1}$ and $\tau_{k,l}$, with distribution function $G_k$, \ie identically distributed as $T_k$. We suppose all the rates of arrival $\lambda_k$ distinct.

    \par Due to the fixed-priority policy, the response times of the task $\tau_k$ depend only those of higher priority tasks. We call job of \textit{level} $k$ any job of a task of higher or equal priority than $\tau_k$, \ie any job $\tau_{i,j}, 1 \leq i \leq k$, $j \in \mathbb{N}$. Let $u_i \triangleq \E[C_i]/\E[T_i] = \lambda_i m_i$ be the \textit{mean utilization} of the task $\tau_i$,  $\Ubar_k \triangleq \sum_{i=1}^k u_i$ be the \textit{$k$-level mean utilization} of $\Gamma$, $\Ubar \triangleq \sum_{\tau_k \in \Gamma} u_k$ the \textit{total mean utilization} of $\Gamma$, and, the \textit{maximum utilization} of level $k$, $\Umax_k  \triangleq \sum_{i=1}^k c_i^{max} / t_i^{min} $, where $c_i^{max}$ is the maximum possible value of $C_i$ and $t_i^{min}$ is the minimum possible value of $T_i$, and the \textit{total maximum utilization}  $\Umax  \triangleq \sum_{\tau_k \in \Gamma} c_k^{max} / t_k^{min}$.
	
	\begin{definition} \label{def:stable}
		Let $\Gamma$ be a real-time system with total mean utilization $\Ubar$. $\Gamma$ is said \textit{stable} if and only if $\Ubar < 1$.
	\end{definition}

	The \textit{response time} $R_{k,l}$ of a job $\tau_{k,l}$ is the elapsed time between its activation (or arrival) $A_{k,l} \triangleq \sum_{j=1}^{l} T_{k,j}$ and the end of its execution. We consider implicit deadlines, \ie the absolute deadline of the job $\tau_{k,l}$ is $A_{k,l+1}$ and $A_{k,l+1} - A_{k,l} = T_{k,l+1} \sim D_k$ is its relative deadline with distribution function $G_k$. For the sake of comprehension, we refer to the relative deadline of $\tau_k$ with the variable $D_k$ and to its inter-arrival time $T_k$, even though they have the same distribution. At each arrival of jobs of a same task, previous jobs miss their deadline and are discarded.

	\begin{remark} \label{remark:discard} Such discarding policy is the reason why response times of jobs of a same task are independent: response times are functions of their associated execution times and only the higher priority jobs execution times. As described in \eqref{eq:TDA0}, discarding jobs makes their backlog disappear from the computation of the response time of the next job of the same task. However, in the response time analysis in Section~\ref{sec:tda} we build response time approximations regardless of this discarding policy which makes them upper-bounds of response times, see \eqref{eq:bound}. \end{remark}
    
	\begin{example}[Utilization and probabilistic deadlines] \label{ex:deadlines}
	    Let us consider the task set $\{ \tau_1, \tau_2 \}$, with $\Pr(C_1 = 1) = 1$, $\Pr(T_1=2)=1$, $\Pr(C_2 = 1) = 1/2$, $\Pr(C_2=2)=1/2$, $\Pr(T_2=3.1)=1/2$, $\Pr(T_2=4)=1/2$. Then $\Ubar = 1/2 + ((1+2)/2) / ((3.1+4/2) \approx 0.92$, $\Umax = 1/2 + 2/3.1 \approx 1.16$. Furthermore, because deadlines are implicit, if $T_{2,2} = 3.1$, the deadline of the job $\tau_{2,1}$ is $3.1$, if $T_{2,2} = 4$ its deadline is $4$. The deadline of any job $\tau_{1,j}$ is $2$ because $T_{1,j}$ is always $2$ for all $j \in \mathbb{N}$.
	\end{example}
	
    A job $\tau_{k,l}$ respects its deadline if its response time $R_{k,l}$ is smaller or equal to its deadline, \ie $R_{k,l} \leq D_k$. In order to be \textit{schedulable}, the \textit{worst-case deadline failure probability} \cite{von2021efficiently,navet2000worst} of a task $\tau_k$ should be smaller than its \textit{permitted failure rate} $\alpha_k \in (0,1)$ \cite{guo2015edf}, \ie \begin{equation}\small{  \Pr\left(\sup_{l \in \mathbb{N}} R_{k,l} > D_k\right)  = \int \Pr\left(\sup_{l \in \mathbb{N}} R_{k,l} > t \right) dG_k(t) \leq \alpha_k } \label{eq:schedulable} \end{equation} where $\alpha_k$ is a parameter given \textit{a priori}.
	
	\begin{definition} \label{def:feasible}
		Let $\Gamma$ be a real-time system. $\Gamma$ is said \textit{feasible} in the fixed-priority domain if and only if there exists at least one   fixed-priority preemptive scheduling policy such that all tasks $\tau_k \in \Gamma$ satisfy \eqref{eq:schedulable}.
	\end{definition}

	\begin{remark} \label{sec:indep}For all tasks $\tau_k \in \Gamma$ and all jobs $\tau_{k,l}, l \in \mathbb{N}$ the execution times $C_{k, l}$ are independent. We discuss in Section~\ref{sec:future} how the model may be extended in the dependent case, i.e. when the execution times $C_{k,l}, \tau_k \in \Gamma, l \in \mathbb{N}$ have a dependence structure.
	
	The inter-arrival times $T_{k, l}, l \in \mathbb{N}$ of a given task $\tau_k$ are independent, however inter-arrival times of two different tasks do not need to be independent for the provided results.
	
	For a given job $\tau_{k,l}$, its execution time $C_{k,l}$ and the inter-arrival time $T_{k,l}$ are not necessarily independent, meaning that the analysis we provide holds if execution times are correlated to the time separating a job to the previous instance of the same task. \end{remark}

    In the rest of this paper, we refer to \textit{probabilistic real-time systems} as simply \textit{real-time systems} when there is no ambiguity.
    
	\subsection{Elements of probability}
	Let us define some elements of probability that we use in this work. Let $X$ and $Y$ be two non-negative variables.
	
	\begin{definition}[Distribution function, probability function and distribution]
	     The \textit{distribution function} of $X$ is the function defined by $F_X(x) = \Pr(X \leq x)$. The \textit{probability function} of $X$ is the function $p_X(x) = (dF_X/dx)(x)$. We denote the \textit{distribution} of $X$ by $dF_X(x) = p_X(x)dx$.
	\end{definition}
	
	\begin{definition}[Identically distributed] \label{def:identical}
		$X$ and $Y$ are identically distributed and we write $X \sim Y$, or $X \sim dF_Y$, if and only if $F_X(x) = F_Y(x)$ for all $x > 0$.
	\end{definition}
	
	\begin{definition}[Independence] \label{def:independent}
		$X$ and $Y$ are independent if and only if $\Pr(X \leq x, Y \leq y)= F_X(x) \cdot F_Y(y)$ for all $x,y > 0$.
	\end{definition}

	\begin{definition}[Conditional distribution] \label{def:conditioned}
		The conditional distribution function of $X$ given $Y = y$ is written $$\Pr(X \leq x \mid Y = y) = \frac{1}{p_Y(y)} \cdot \frac{d}{dy}\Pr(X \leq x, Y \leq y)$$ for all $x,y > 0$, and is such that \begin{equation}\label{def:conditioneq}\Pr(X \leq x)= \int \Pr(X \leq x | Y = y) dF_Y(y)\end{equation} for all $x > 0$.
	\end{definition}

	\begin{definition}[Convolutions]
    Let $X$ and $Y$ be independent and $Z = X+Y$. Then the distribution function of $Z$ is the \textit{convolution} $F_Z = F_X \ast F_Y$ defined by $$F_Z(z) = \int F_X(z - y)dF_Y(y) = \int F_Y(z - x)dF_X(x)$$
    \end{definition}

	\begin{definition}[Brownian motions] \label{def:bm}
		A \textit{standard Brownian motion} is a process $B$ such that $B(0) = 0$ and with independent increments $B(t) - B(s)$ of distribution function $ \Phi\left( \cdot  /\sqrt{t-s}\right)$ for any $0 \leq s < t$, where $$d\Phi(x) =  \exp\left(-x^2/2\right) / \sqrt{2\pi} dx$$ is the standard normal distribution.
		
		A \textit{Brownian motion} of drift $u$ and deviation $v > 0$ is a process $W$ such that there exists a standard Brownian motion $B$ such that \begin{equation}W(t) = u t + v B(t) \label{eq:driftedbm}\end{equation} 
	\end{definition}
	
	\begin{definition}[Inverse Gaussian distribution]
	    Let $\theta > 0$ and $\gamma > 0$. The inverse Gaussian distribution of parameters $(\theta, \gamma)$ has a probability function $\psi$ defined by \begin{equation} \psi\left(t; \theta, \gamma\right) =  \sqrt{\frac{\gamma}{2\pi t^{3}}} \exp \left( - \frac{ \gamma( t - \theta)^2}{2 \theta^2 t} \right) \label{eq:inverseGaussienne} \end{equation}
	\end{definition}

    \begin{remark}
        Note that deterministic\footnote{A real-time system is deterministic if all task parameters are described by unique values.} real-time systems form a subclass of probabilistic real-time systems, as the distributions of the parameters of $\Gamma$ can describe a unique value if needed. Hence, we propose results covering also deterministic parameters, would they be execution times, inter-arrival times or deadlines. For instance, if the  execution time $C_k$ of a task $\tau_k \in \Gamma$ is deterministic and equal to $5$, then its distribution function is $F_k(x) = \mathbf{1}_{ [5, \infty )}(x)$, where \begin{equation*}\mathbf{1}_{A}(x) = \left\{\begin{array}{rl} 1 & \textrm{if } x \in A \\  0 & \textrm{otherwise} \end{array}\right.\end{equation*}
    \end{remark}

	\subsection{Backlog stochastic process} \label{sec:backlog}
	
	Let us define the following stochastic processes:
	
	\begin{enumerate}
		
		\item $N_{k}(t) \triangleq \sum_{l=1}^\infty \mathbf{1}_{ [0, t] }(A_{k, l}) $ as the number of jobs of $\tau_k$ released before $t > 0$, of mean $\E[N_k(t)] = \lambda_k t$ \cite[(1.1.22)]{baccelli2013elements}. $N_k$ is right-continuous.
		
		\item the $k$-level demand $W_k(t)$ as the accumulation of the execution times required by jobs of priority higher or equal than $\tau_k$, regardless of potential deadline misses, released before the instant $t$ to complete, \begin{equation*} W_k(t) \triangleq \sum_{i=1}^k \sum_{j=1}^{N_i(t)} C_{i,j} \end{equation*} of mean $\E[W_k(t)] = \sum_{i=1}^k \E[N_i(t)] \E[C_{i}] = \Ubar_k t$, see \cite[(6.53)]{Janssen2006}. $W_k$ is right-continuous.
		
		\item the $k$-level backlog $\beta_k(t)$ as the remaining demand at $t \geq 0$, defined by $\beta_k(0) \geq 0$ and \begin{equation}  \beta_k(t) \triangleq \beta_k(0) + W_k(t-) - \int_0^{t} \mathbf{1}_{ \{ \beta_k(s) > 0 \} } ds \label{def:backlogs}\end{equation} where $W_k(t-) = \lim_{\epsilon \to 0, \epsilon > 0} W_k(t - \epsilon)$ is the left limit of $W_k$ at the instant $t$, and $\int_0^{t} \mathbf{1}_{ \{ \beta_k(s) > 0 \} } ds$ is the total busy time of level $k$~\cite[p. 2]{lehoczky1990fixed} before $t > 0$.
	\end{enumerate}
	
	\begin{example}[Backlog]
	    Consider the task set of Example~\ref{ex:deadlines}. Suppose both tasks $\tau_1$ and $\tau_2$ are activated at time $t=0$ and  $T_{2,2}=3.1$, $C_{2,1}=2$, $C_{2,2}=1$. Then,  $\beta_2(0) = 0$,  $$\begin{array}{rcl} \beta_2(2) &=& \beta_2(0) + W_2(2-) - \int_0^2 \mathbf{1}_{ \{ \beta_2(s) > 0 \} } ds \\ &=& 0 + (1+2) - 2 = 1 \\ \beta_2(3) &=& \beta_2(0) + W_2(3-) - \int_0^3 \mathbf{1}_{ \{ \beta_2(s) > 0 \} } ds \\ &=& 0 + (1+2) - 3 = 0\\ \beta_2(3.1) &=& \beta_2(0) + W_2(3.1-) - \int_0^{3.1} \mathbf{1}_{ \{ \beta_2(s) > 0 \} } ds \\ &=& 0 + (1+2) - 3 = 0\\ \beta_2(3.2) &=& \beta_2(0) + W_2(3.2-) - \int_0^{3.2} \mathbf{1}_{ \{ \beta_2(s) > 0 \} } ds \\ &=& 0 + (1+2+1) - 3.1 = 0.9 \end{array}$$ 
	    At the instant $t=3.2$, the backlog of level $2$ is $0.9$.
	\end{example} 
	
	\begin{figure}
		\includegraphics[width=\linewidth]{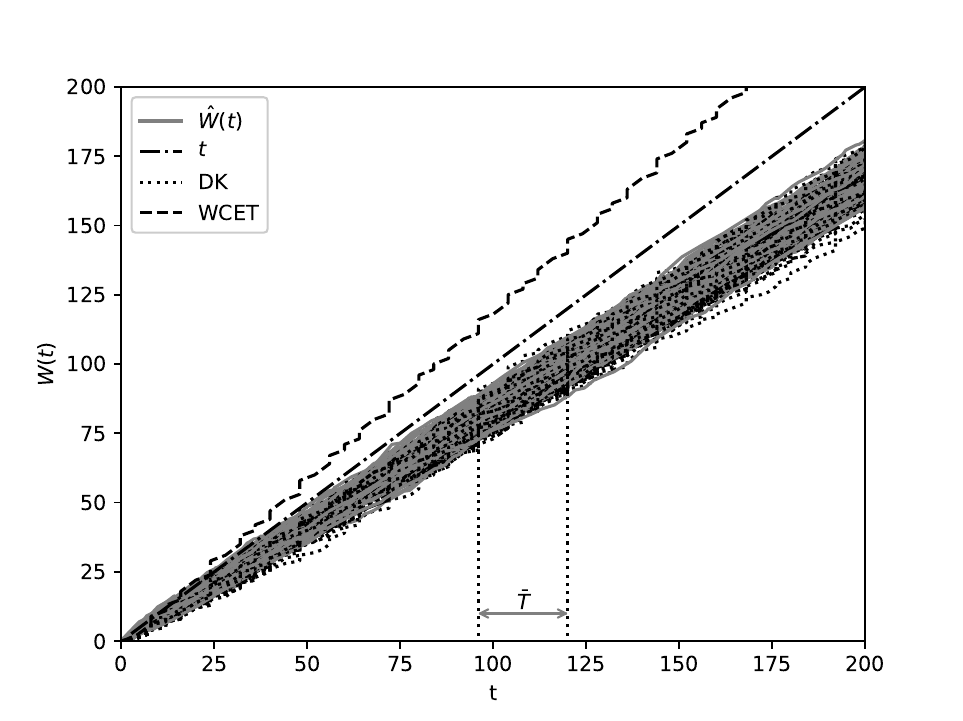}
		\caption{Level $3$ demand of $1\,000$ instances of the \DK model system, the heavy-traffic demand process and the classical deterministic worst-case analysis (WCET) considering only the maximal execution time for each task, for $\Ubar=0.838$ and $\Umax=1.208$ and hyper-period $\bar{T}$, for $\Gamma$ defined in Table~\ref{tab:taskset}.}
		\label{fig:diaz}
	\end{figure}
	
	The process $\beta_k$ describes the remaining demand without considering deadline misses, \ie while jobs are discarded their demand remains in the backlog analysis. Thus, the process $\beta_k$ is an upper-bound of the blocking time, see \eqref{eq:bound}, \ie the response time analysis that we provide in Section~\ref{sec:tda} is pessimistic as defined in \cite{diaz2004pessimism}.
	
	In stable real-time systems, there exist two main states of the backlog \cite[Section 4.2]{diaz2002stochastic}. 

	\begin{definition}[Transient and steady states] \label{def:steady}
		For a stable real-time system $\Gamma$, the \textit{steady-state} backlog is defined by $$\tilde{\beta}_k \triangleq \lim_{t \to \infty} \beta_k(t)$$ We say that $\Gamma$ is steady at the instant $t>0$ when its backlog is in steady-state, \ie $\beta_k(t) \sim \tilde{\beta}_k$, and transient before $\beta_k$ reaches its steady-state.
	\end{definition}
	
	In the following sections, we define the domain of this steadiness, provide an expression of the instant when the system goes from transient to steady, and prove that this time instant exists and is finite under some conditions.

	\section{Heavy-traffic theory} \label{sec:demand}
	
	In this section, we present an analytical approximation of the demand of probabilistic real-time systems. Let us denote $\pi_k(x) \triangleq \Pr\left(\tilde{\beta}_k \leq x\right)$ the distribution function of the steady-state backlog of level $k$. We use background results of queueing theory presented in \ref{sec:loynes} and \ref{sec:general}, provide the exact formulation of the steady-state backlog distribution $\pi_k$ in \ref{sec:exactdistribution} and illustrate this result in the deterministic inter-arrival case already studied by Diaz \etal \cite{diaz2002stochastic} in Section~\ref{blabla}.
	
	\subsection{The Loynes theorem} \label{sec:loynes}
	As the demand and backlog processes $W_k$ and $\beta_k$ are well studied in queueing theory \cite{baccelli2013elements, chen2001fundamentals, jeanblanc2009mathematical}, we provide the formula of the steady-state of the system, by adapting the \textit{Loynes theorem} \cite[Theorem 2.1.1]{baccelli2013elements}.
	
	\begin{theorem}[Loynes theorem] \label{thm:loynes} Let $A_l^k = \inf \{ t > 0 : \sum_{i=1}^k N_i(t) = l \}$ be the activation time of the $l$-th job of level $k$, and $I_l^k$  the index of the task of the $l$-th job of level $k$. If $\Ubar_k < 1$, then the steady-state backlog $\tilde{\beta}_k$ exists, is finite and we obtain \begin{equation} \tilde{\beta}_k =\sup_{n \geq 0}   \left( \sum_{l=1}^{n} C_{I_l^k, l} - (A_{l+1}^k - A_l^k)  \right)^+  \label{eq:steadyback}\end{equation} where $x^+ = \max(0, x)$, \cf \cite[Property 2.2.1]{baccelli2013elements}. In addition, there is an infinite number of idle times, \cf \cite[Property 2.2.5]{baccelli2013elements}. If $\Ubar_k = 1$, then the existence of a finite steady-state $\tilde{\beta}_k$ is uncertain. If $\Ubar_k > 1$, there exists a finite number of idle times of level $k$ and no finite steady-state, \cf  \cite[Property (2.2.2)]{baccelli2013elements}, backlogs are always transient.
	\end{theorem}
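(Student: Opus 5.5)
The plan is to reduce the $k$-level system to a single-server work-conserving queue and then run the classical Loynes construction. Merge the arrival streams of $\tau_1,\dots,\tau_k$ into one stream with epochs $A_l^k$ and service requirements $\sigma_l \triangleq C_{I_l^k,\cdot}$, the execution time of the $l$-th merged job. Since the processor serves level-$k$ work whenever $\beta_k>0$, \eqref{def:backlogs} says that $\beta_k$ is exactly the workload of this queue. Sampling it just before arrivals gives the Lindley recursion
\[ \beta_k(A_{l+1}^k-) = \left(\beta_k(A_l^k-) + \sigma_l - (A_{l+1}^k - A_l^k)\right)^+ . \]
Iterating backwards from an empty initial condition yields
\[ \beta_k(A_{n+1}^k-) = \max_{0\le m\le n}\Bigl(\sum_{l=n-m+1}^{n} \sigma_l-(A^k_{l+1}-A^k_l)\Bigr)^+ . \]

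Next I place the sequence $(\sigma_l, A^k_{l+1}-A^k_l)$ on a stationary ergodic framework, using the shift of the merged marked point process. The merged process of the $k$ independent renewal-type streams, taken under its Palm measure, is stationary, and Remark~\ref{sec:indep} allows $C_{k,l}$ to be correlated with $T_{k,l}$, which still fits the framework. By stationarity, the time-reversed partial sums have the same law as the forward ones. Hence $\beta_k(A_{n+1}^k-)$ is equal in distribution to the nondecreasing sequence $M_n=\max_{m\le n}(\sum_{l=1}^m \sigma_l - (A_{l+1}^k-A_l^k))^+$. Its limit $\tilde\beta_k$ is the expression in \eqref{eq:steadyback}, and $\beta_k(t)$ converges in distribution to it. This is precisely \cite[Property 2.2.1]{baccelli2013elements}.

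The decisive step is finiteness, and it comes from the ergodic theorem. Write $S_n=\sum_{l\le n}\sigma_l-(A^k_{n+1}-A^k_1)$. Then $S_n/n \to \E[\sigma]-\E[A_{2}^k-A_1^k]$, and, since jobs of task $i$ form a fraction $\lambda_i/\sum_{j\le k}\lambda_j$ of the merged stream, this limit equals $(\sum_{i\le k}\lambda_i m_i - 1)/\sum_{i\le k}\lambda_i=(\Ubar_k-1)/\sum_{i\le k}\lambda_i$. If $\Ubar_k<1$ the drift is negative, so $S_n\to-\infty$ a.s. and the supremum is finite. The Loynes variable $\tilde\beta_k$ then satisfies $\Pr(\tilde\beta_k=0)>0$. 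Applying ergodicity to the indicator of this event shows that the queue empties infinitely often, which gives infinitely many idle times (\cite[Property 2.2.5]{baccelli2013elements}).

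If $\Ubar_k>1$, $S_n\to+\infty$ a.s., so $\beta_k(t)\to\infty$. After some finite time the backlog is never zero, so there are only finitely many idle times and no finite steady state (\cite[Property 2.2.2]{baccelli2013elements}). For $\Ubar_k=1$ the drift is zero. Then $\limsup S_n$ may be $+\infty$ (for example under Chung--Fuchs recurrence in the i.i.d. nondegenerate case) or finite in degenerate deterministic cases, which is why the statement only says the outcome is uncertain.

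The main obstacle is the stationarity step: justifying that the superposition of the $k$ task streams gives a stationary ergodic marked sequence. The streams are only renewal processes with distinct rates, possibly started synchronously, so I would work with Palm versions and use coupling to transfer the limit back to the original initial condition.
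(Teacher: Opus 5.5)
\textbf{Verdict.} The skeleton is the right one, but the stationarity step you postpone cannot be completed as planned, and two further steps are wrong.

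\textbf{Stationarity and coupling.} The paper gives no argument of its own here. It cites Properties 2.2.1, 2.2.2 and 2.2.5 of \cite{baccelli2013elements}, which are formulated for a stationary (ergodic) sequence $(\sigma_l, A^k_{l+1}-A^k_l)_{l\in\mathbb{Z}}$. Your Lindley/backward-iteration/drift skeleton is exactly the proof of those properties, so everything rests on the step you defer. That step fails in two ways.

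First, stationarity is invariance under shifts, not under reversal. Backward iteration yields $\sup_{m\ge 0}\bigl(\sum_{l=-m}^{-1}(\sigma_l-(A^k_{l+1}-A^k_l))\bigr)^+$. This has the law of the forward expression \eqref{eq:steadyback} only if the increments are reversible in law (e.g.\ i.i.d.). That fails in precisely the case you import from Remark~\ref{sec:indep}. Take one task with $T\in\{1,3\}$ and $C_{1,l}=\tfrac{9}{10}T_{1,l}$. Every job whose last three gaps are $3,1,1$ finds backlog exactly $\tfrac85$, yet the forward supremum only takes values in $\{0,0.2,0.5,0.8,1.1,1.4,1.7\}$.

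Second, Palm versions plus coupling cannot reach the paper's set-up: all streams start at $0$, $G_k$ is discrete, and the DK model is included. A periodic stream never forgets its phase. The phase-randomised superposition of streams with commensurable periods is not ergodic, because relative phases (modulo the lattice generated by the periods) are invariant. The synchronous start therefore never couples with it. Take $T_1\equiv 2$, $C_1\equiv\tfrac32$, $T_2\equiv 3$, $C_2\equiv\tfrac12$, so $\Ubar_2=\tfrac{11}{12}$. From $t=2$ on, $\beta_2$ is $6$-periodic, with $\beta_2(A)=\tfrac12$ at every release $A\equiv 3 \pmod 6$ and $0$ at all other releases. No limit exists, either along releases or in continuous time, so here the conclusion itself fails and no coupling can rescue it. The natural repair is to sample at hyperperiod boundaries. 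There the one-step map is driven by i.i.d.\ blocks of execution times, and Loynes's construction applies with no Palm argument.

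\textbf{Continuous time versus release epochs.} Separately, the clause ``$\beta_k(t)$ converges in distribution to it'' is false in general. The Loynes variable is the backlog \emph{found by an arriving job}, which is a Palm law. The time-stationary law of $\beta_k(t)$ puts mass $1-\Ubar_k$ at $0$, and the Palm law need not. For one task with $T\equiv1$ and $C\equiv\tfrac12$, every job finds an empty processor, while $\beta_1(t)$ oscillates in $[0,\tfrac12]$ forever. So the statement's $\tilde\beta_k=\lim_{t\to\infty}\beta_k(t)$ has to be read along release epochs.

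Even then, $\beta_k(A^k_{l+1}-)$ is not your Lindley variable at simultaneous releases: if $A^k_{l+1}=A^k_l$, your recursion would force $\sigma_l=0$. Define $w_l$ directly as the backlog found by the $l$-th job, with ties broken by priority.

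\textbf{What survives.} The drift, finiteness, idle-time and $\Ubar_k>1$ parts need none of the stationary machinery. Per-task renewal strong laws give $S_n/n\to(\Ubar_k-1)/\sum_{i\le k}\lambda_i$. This holds without independence across tasks, which Remark~\ref{sec:indep} does not assume anyway. Infinitely many idle times then follow pathwise. If the backlog never vanished after the $N$-th release, you would have $w_n=w_N+S_{n-1}-S_{N-1}\to-\infty$, which is impossible. Your ergodicity argument on $\mathbf{1}\{\tilde\beta_k=0\}$ is therefore unnecessary, and it is unavailable in the periodic case.
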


	Our goal is to find the distribution of the steady-state backlog $\tilde{\beta}_k$. However, given the generality of this model, we cannot provide an exact description of the process $\beta_k$. This is why we use the \textit{heavy-traffic assumptions} to find an exact formula for the distribution of the steady-state backlog $\tilde{\beta}_k$ when the system utilization gets close to $1$, \cf \figurename~\ref{fig:demand}, \ie we build a process $\hat{\beta}_k$ such that its steady-state is also $\tilde{\beta}_k$.
	
	\subsection{The heavy-traffic theorem} \label{sec:general}

	A first step in the approximation of the backlog process $\beta_k$ is the approximation of the demand process $W_k$. 
	
	\begin{theorem}[Heavy traffic theorem, Section 6.5, \cite{chen2001fundamentals} ]\label{lem:heavy} The level $k$ \textit{heavy-traffic} demand process \begin{equation*}\hat{W}_k(t)~\triangleq~\lim_{n \to \infty}~W_k(nt)/\sqrt{n} \end{equation*} is a Brownian motion of drift $\Ubar_k$ and deviation $v_k$, where $v_k^2~\triangleq~\sum_{i=1}^k~\lambda_i \cdot s_i^2$. See Figure 1.\end{theorem}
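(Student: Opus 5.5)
The statement has to be read in the usual diffusion-scaling sense, since $W_k(nt)/\sqrt{n}$ taken literally grows like $\Ubar_k t\sqrt{n}$. I will prove that the centered and rescaled fluctuation
\[
\frac{W_k(nt)-\Ubar_k nt}{\sqrt{n}}
\]
converges in distribution, in the Skorokhod space $D[0,\infty)$, to $v_k B(t)$ for a standard Brownian motion $B$. The heavy-traffic demand is then $\hat{W}_k(t)=\Ubar_k t+v_k B(t)$, which is a Brownian motion of drift $\Ubar_k$ and deviation $v_k$ in the sense of \eqref{eq:driftedbm}.

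\textbf{Step 1: split each task's demand.} Since $W_k=\sum_{i\le k}\sum_{j\le N_i(t)}C_{i,j}$, I write each task's compound renewal process as
\[
\sum_{j=1}^{N_i(t)}C_{i,j}=S_i\bigl(N_i(t)\bigr)+m_iN_i(t),\qquad S_i(n)\triangleq\sum_{j=1}^{n}(C_{i,j}-m_i).
\]
The execution times $C_{i,j}$ are i.i.d. with variance $s_i^2$ by Remark~\ref{sec:indep}. Donsker's theorem then gives $S_i(\lfloor n\cdot\rfloor)/\sqrt{n}\Rightarrow s_iB_i(\cdot)$.

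\textbf{Step 2: random time change.} The functional strong law for renewal processes gives $N_i(n\cdot)/n\to\lambda_i\,\cdot$ uniformly on compacts almost surely. I then invoke the random time-change theorem: composition is continuous at continuous limits in the $J_1$ topology. This yields
\[
\frac{S_i(N_i(n\cdot))}{\sqrt{n}}\Rightarrow s_iB_i(\lambda_i\,\cdot)\overset{d}{=}\sqrt{\lambda_i}\,s_i\,B_i'(\cdot).
\]
Only the deterministic limit of $N_i$ enters here. So possible correlation between $C_{i,j}$ and $T_{i,j}$, or dependence between arrival processes of different tasks, does not affect the limit, which matches the generality claimed in Remark~\ref{sec:indep}. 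In practice I would justify this joint convergence through an Anscombe-type argument.

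\textbf{Step 3: the counting term.} The remaining term $m_iN_i(nt)$ is centered at $m_i\lambda_i nt=u_int$, and summing the centerings gives exactly $\Ubar_k nt$. Its own fluctuation $m_i(N_i(nt)-\lambda_int)/\sqrt{n}$ has limiting variance $m_i^2\lambda_i^3\mathrm{Var}(T_i)$. This is absent from $v_k^2=\sum_i\lambda_is_i^2$.

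This is the main obstacle: getting exactly the stated $v_k$. I would first check whether the intended normalization of Theorem~\ref{lem:heavy} (following \cite{chen2001fundamentals}) centers by $m_iN_i$, i.e. conditions on the arrival process. If so, the term drops out and the theorem holds exactly as stated. Otherwise the stated $v_k$ is correct when inter-arrivals are deterministic, and in general one must add $\sum_i m_i^2\lambda_i^3\mathrm{Var}(T_i)$ to $v_k^2$. I would make this explicit rather than hide it.

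\textbf{Step 4: sum over tasks.} The sequences $(C_{i,j})_j$ are independent across tasks $i$, so the limits $B_i'$ are independent Brownian motions. Joint convergence follows from tightness of each component and independence of the Donsker limits. Summing, $\sum_i\sqrt{\lambda_i}s_iB_i'$ is a Brownian motion with variance $\sum_i\lambda_is_i^2=v_k^2$ per unit time. Adding back the drift $\Ubar_kt$ gives $\hat{W}_k$.
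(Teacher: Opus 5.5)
The paper gives no proof of this statement; it is imported from Chen--Yao, Section~6.5. Your proposal is correct in substance, and more careful than the statement itself.

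\textbf{Comparison of routes.} You use Donsker's theorem for $S_i$, the almost-sure fluid limit $N_i(n\cdot)/n\to\lambda_i\,\cdot$, a random time change, and then a sum over tasks. This is the standard weak-convergence proof of the functional CLT for compound renewal processes, and it needs only second moments. The source works pathwise instead: the bound \eqref{eq:error} that the paper quotes from it is of strong-approximation type, coupling $W_k$ with $\Ubar_k t+v_kB(t)$ on one probability space. That coupling is what gives meaning to the uncentered object $\hat W_k$ and to the error rate used later, at the cost of stronger moment assumptions. Your argument yields only the distributional limit of the centered process.

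\textbf{Your diagnoses of the statement.} Both are right. First, $W_k(nt)/\sqrt n$ diverges like $\Ubar_k t\sqrt n$. Second, $v_k^2=\sum_{i\le k}\lambda_i s_i^2$ omits the arrival fluctuation. You can settle the dichotomy of Step~3: neither the paper nor the usual G/G/1 diffusion approximation conditions on arrivals. Indeed, the paper's own $\eta_k$ in Proposition~\ref{prop:steadydistribution} carries $\gamma_{a,k}^2$. So in general $v_k^2=\sum_{i\le k}\lambda_i m_i^2(\gamma_{a,i}^2+\gamma_{e,i}^2)$, which is exactly your corrected value. The stated $v_k$ is exact only for deterministic inter-arrival times, as in the DK model and the experiments of Section~\ref{sec:exp}.

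\textbf{Smaller corrections.} The robustness you claim at the end of Step~2 holds for the term $S_i(N_i(\cdot))$. It does not hold for $W_k$ once the counting fluctuation is present; for deterministic arrivals there is nothing to correlate anyway. With random arrivals, write
$\sum_{j\le N_i(t)}C_{i,j}-u_it=\sum_{j\le N_i(t)}(C_{i,j}-u_iT_{i,j})-u_i\,(t-A_{i,N_i(t)})$,
where the last term is $o(\sqrt t)$. A single random time change then gives the variance rate
$\lambda_i\mathrm{Var}(C_i-u_iT_i)=\lambda_is_i^2+m_i^2\lambda_i^3\mathrm{Var}(T_i)-2m_i\lambda_i^2\mathrm{Cov}(C_i,T_i)$.
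So your correction term tacitly assumes $C_{i,j}$ independent of $T_{i,j}$. Dependence between arrival processes of different tasks adds cross-covariances, and a joint FCLT must then be assumed. Hence the generality of Remark~\ref{sec:indep} is not free once inter-arrivals are random.

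Finally, no Anscombe-type argument is needed. Since $N_i(n\cdot)/n$ has a deterministic limit, its joint convergence with $S_i(n\cdot)/\sqrt n$ is automatic. In Step~4, the product limit comes from the independence of the prelimit sequences $(C_{i,j})_j$ across $i$, not from independence of the limits.
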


	\begin{figure*}[t]
		\centering
		\subfloat[$\Ubar = 0.625, \Umax=0.833$]{
			\includegraphics[width=0.23\textwidth]{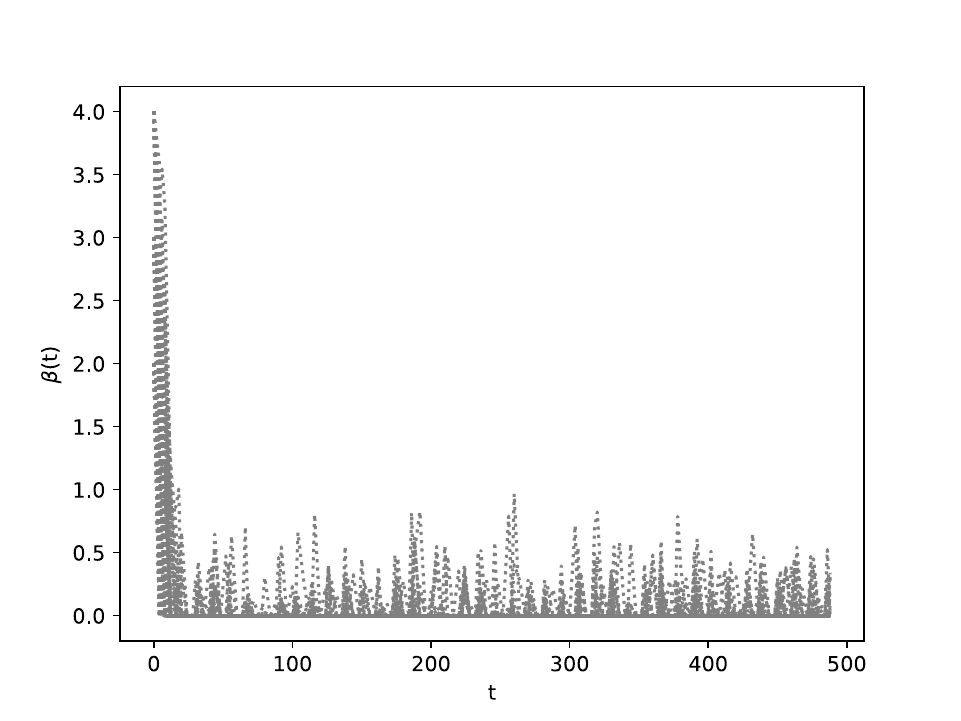}
			\label{fig:backlog1}
		}
		\hfill
		\subfloat[$\Ubar = 0.838, \Umax=1.208$]{
			\includegraphics[width=0.23\textwidth]{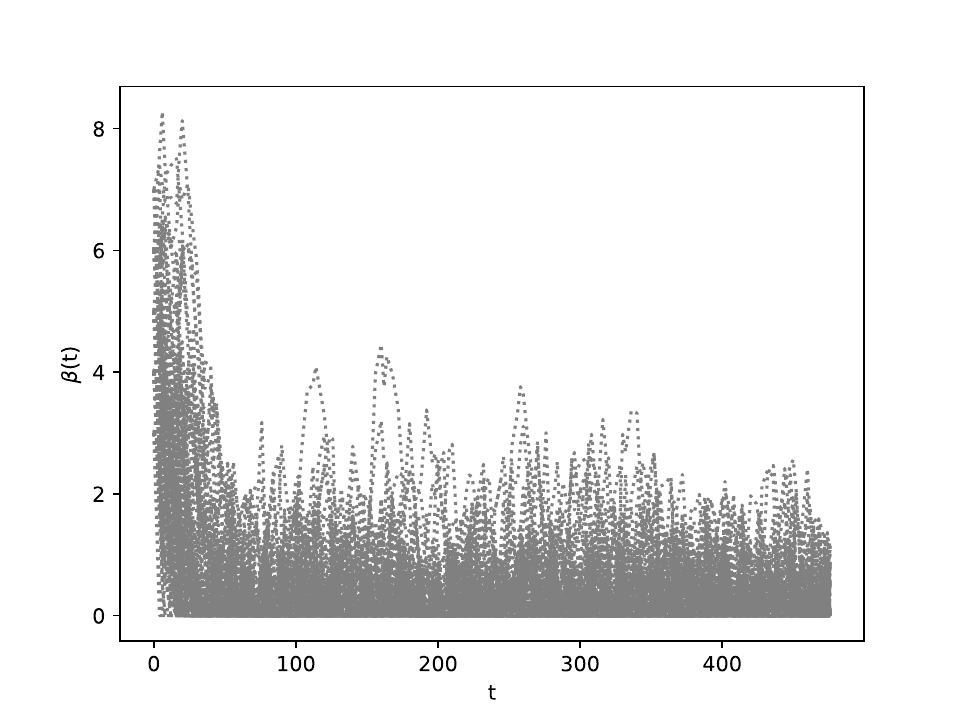}
			\label{fig:backlog2}
		}
		\hfill
		\subfloat[$\Ubar = 0.998, \Umax=1.508$]{
			\includegraphics[width=0.23\textwidth]{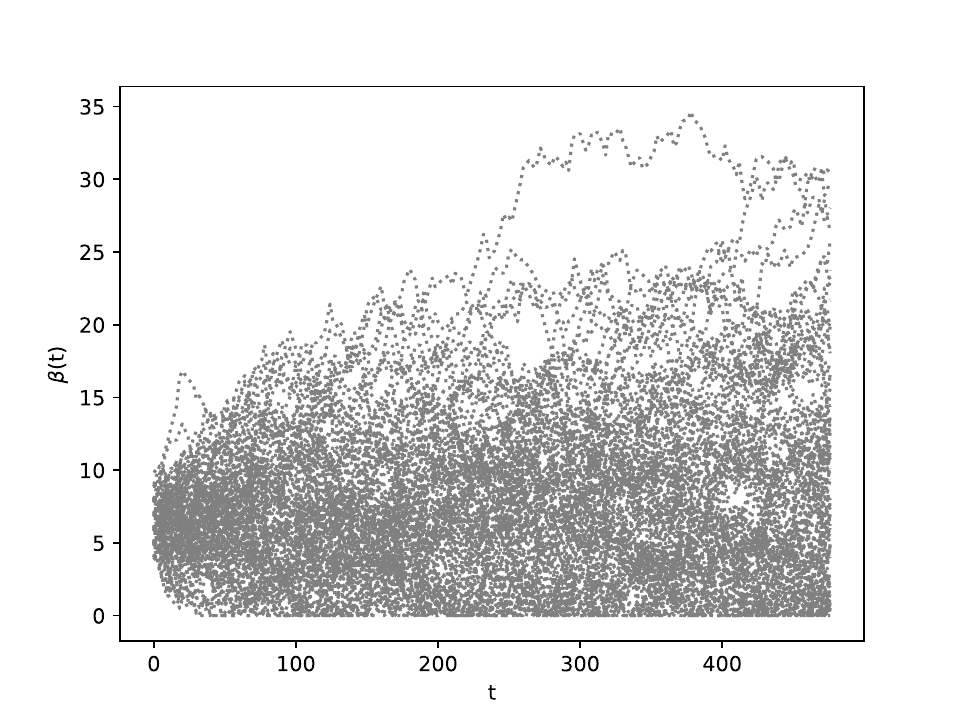}
			\label{fig:backlog3}
		}
		\hfill
		\subfloat[$\Ubar = 1.018, \Umax=1.608$]{
			\includegraphics[width=0.23\textwidth]{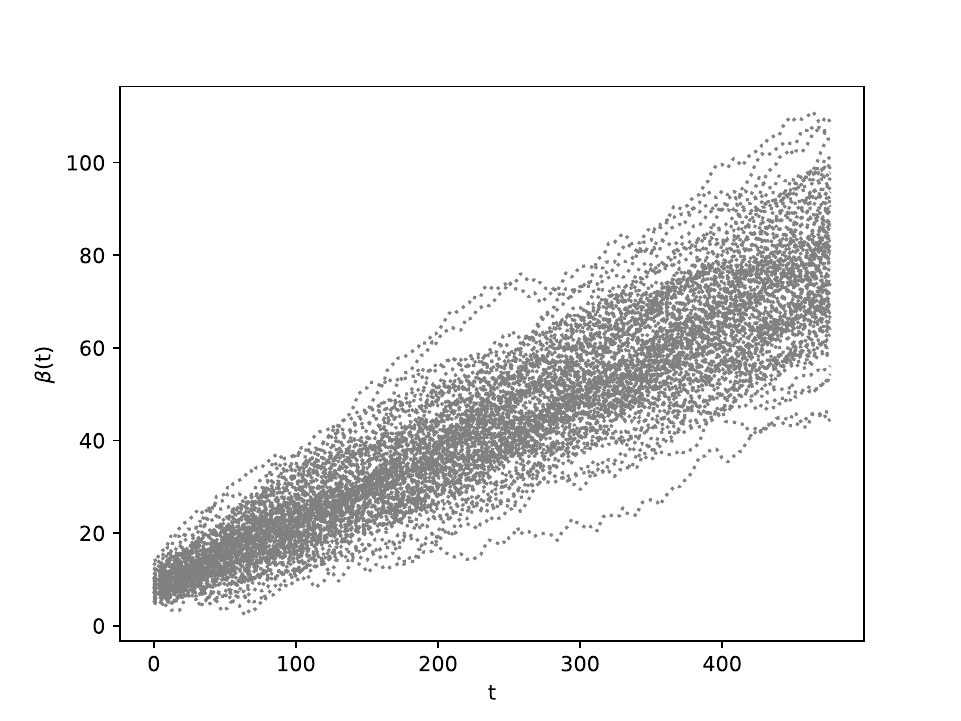}
			\label{fig:backlog4}
		}
		\caption{The heavy-traffic backlog process of systems with different mean utilization, initialized with $\beta(0) = \sum_{\tau_i \in \Gamma} C_i$.}
		
		\label{fig:demand}
	\end{figure*}

	In \cite{lehoczky1996real}, the author uses the \textit{heavy-traffic approximation} providing  the distribution of the lateness of jobs in a system with exponential inter-arrival times. To illustrate the heavy-traffic approximation, one can think of water continuously flowing into a sink at a rate $\lambda_i$ and the execution times as the rate $1/m_i$ the water leaves the sink. It is usually understood as true when the system is at full processor utilization, because the theorems of heavy-traffic theory are exact when $\Ubar \to 1$. However we use it as a way to build an upper-bound of the backlog process. Indeed, the heavy-traffic assumption should be seen as a bound, or more specifically, a way to suppose that the system utilization is at its  maximum (\ie $\Ubar = 1$), providing \emph{upper-bounds} that are exact when the processor utilization at 100\%. \figurename~\ref{fig:stationary2} illustrates this upper-bound becoming exact in \figurename~\ref{fig:stationary3} and \figurename~\ref{fig:stationary4}.

    In this section, we use the heavy-traffic assumption to find the steady-state backlog $\tilde{\beta}_k$ as an approximation upper-bounding the \textit{blocking time} of the system in its steady-state, see \eqref{eq:bound}. The approximation in Theorem~\ref{lem:heavy} leads to the standard Brownian motions which is continuous. It means that instead of looking at the demand for a large amount of time, we consider a re-scaled version of the demand in order to build a good approximation. Theorem~\ref{lem:heavy} can be written as  \begin{equation}\hat{W}_k(t) = \Ubar_k t + v_k B(t) \label{eq:drift} \end{equation} where $B$ is a standard Brownian motion, as defined in \eqref{eq:driftedbm}.
	
	\subsection{Steady-state backlog distribution} \label{sec:exactdistribution}

	In order to find the steady-state backlog $\tilde{\beta}_k$, we work with the heavy-traffic demand $\hat{W}_k$. We define the \textit{heavy-traffic backlog process} by \begin{equation}\hat{\beta}_k(t) =  \beta_k(0) + \hat{W}_k(t) - \int_0^{t} \mathbf{1}_{\{ \hat{\beta}_k(s) > 0 \}} ds \label{eq:driftedbacklog} \end{equation}
	
	This heavy-traffic backlog process $\hat{\beta}_k$ is a Brownian motion \textit{reflected at the origin} of drift $\Ubar_k - 1$ and deviation $v_k$, and its limit is also $\tilde{\beta}_k$, \cf \cite[Section 6.5]{chen2001fundamentals}. Its stationary distribution $\pi_k$ being the same for the initial backlog process $\beta_k$, \ie $\lim_{t \to \infty} \beta_k(t) = \tilde{\beta}_k = \lim_{t \to \infty} \hat{\beta}_k(t)$ we get its explicit expression in the following proposition.

	\begin{proposition} \label{prop:steadydistribution} Let $\epsilon > 0$ and $\Ubar = 1 - \epsilon$. Consider the parameter \begin{equation}\label{eq:etadeter}\eta_k \triangleq 2 \left(1 - u_k \right)/\left( \lambda_k (\gamma_{a,k}^2 + \gamma_{e,k}^2)\right)\end{equation} where $\gamma_{a,k}$ and $\gamma_{e,k}$ denote the coefficients of variation\footnote{The ratio of the standard deviation and the mean.} of respectively inter-arrival and execution times of $\tau_k$. Then, when $\epsilon \to 0$,  the distribution function $\pi_k$ of the steady-state backlog $\tilde{\beta}_k$ is defined for $x > 0$ by	\begin{equation}\label{eq:loynesdistribution}
        \pi_k(x) =   \sum_{i=1}^k \left(1 - \exp\left(- \eta_i x\right)\right) \prod_{\substack{j=1 \\ j\neq i}}^k  ( 1 - \eta_i/\eta_j)^{-1} 	\end{equation}
	\end{proposition}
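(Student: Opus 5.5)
The plan is to write the level-$k$ steady-state backlog as a sum of independent exponential contributions, one per priority level, and then identify the resulting hypoexponential law, which is exactly \eqref{eq:loynesdistribution}.

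\textbf{Step 1 (each level's increment is exponential).} I would start from the heavy-traffic backlog \eqref{eq:driftedbacklog}. By the discussion after it, this is a Brownian motion reflected at the origin with drift $\Ubar_k-1$ and deviation $v_k$. For $\Ubar_k<1$, the stationary law of a reflected Brownian motion with drift $-\delta<0$ and variance $\sigma^2$ is exponential with rate $2\delta/\sigma^2$. This follows from the supremum representation \eqref{eq:steadyback} of Theorem~\ref{thm:loynes}, since $\sup_{t}(\sigma B(t)-\delta t)$ is exponential. I would apply this not to the whole level-$k$ backlog but to the increment $\tilde{\beta}_k-\tilde{\beta}_{k-1}$, the part contributed by $\tau_k$ (the notation table calls its mean $\eta_k^{-1}$). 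As $\epsilon\to0$ the higher-priority levels absorb capacity, and a heavy-traffic / Kingman-type argument gives an effective drift $1-u_k$. The variance per unit time of the $\tau_k$ flow is $\lambda_k(\gamma_{a,k}^2+\gamma_{e,k}^2)$, in normalized units where the mean service is absorbed into the coefficients of variation. Together these give $\tilde{\beta}_k-\tilde{\beta}_{k-1}\sim \mathrm{Exp}(\eta_k)$ with $\eta_k$ as in \eqref{eq:etadeter}.

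\textbf{Step 2 (independence across levels).} I would argue that in the limit $\epsilon\to0$ the increments $\tilde{\beta}_i-\tilde{\beta}_{i-1}$, $i=1,\dots,k$, are independent. The demand of $\tau_i$ contributes an independent Brownian component, because execution times are independent and the arrivals of each task form a renewal process (Remark~\ref{sec:indep}). Theorem~\ref{lem:heavy} then gives independent Brownian increments for the separate tasks. Telescoping yields
\[
\tilde{\beta}_k=\sum_{i=1}^k(\tilde{\beta}_i-\tilde{\beta}_{i-1}),
\]
a sum of independent exponentials.

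\textbf{Step 3 (convolution).} Because the rates $\lambda_i$ are assumed distinct, I would take the $\eta_i$ to be distinct as well. The convolution of independent $\mathrm{Exp}(\eta_i)$ variables is the hypoexponential distribution with density $\sum_i \eta_i e^{-\eta_i x}\prod_{j\ne i}\eta_j/(\eta_j-\eta_i)$. Two routes give this: partial fractions of the Laplace transform $\prod_i \eta_i/(\eta_i+s)$, or induction on $k$ using the convolution definition. Integrating from $0$ to $x$ and using $\eta_j/(\eta_j-\eta_i)=(1-\eta_i/\eta_j)^{-1}$ gives exactly \eqref{eq:loynesdistribution}. As a sanity check, the identity $\sum_i\prod_{j\neq i}(1-\eta_i/\eta_j)^{-1}=1$ guarantees $\pi_k(\infty)=1$.

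\textbf{Main obstacle.} The delicate part is Step 1, together with the independence in Step 2. The issue is justifying that the per-level increment is an exponential with effective drift $1-u_k$, not $1-\Ubar_k$, and that these increments decouple under priority scheduling in the heavy-traffic limit. I would first try a state-space-collapse argument from multiclass heavy-traffic theory (Chen--Yao, Section 6.5): scale each class separately and show the limiting workload decomposes across priority levels. If that fails, I would fall back to treating each level as a reflected Brownian motion driven only by the $\tau_k$ flow, with the higher-priority work as a service interruption, and verify the rate via Kingman's formula.
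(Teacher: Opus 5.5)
Your outline is the paper's proof step for step. The paper also takes $\tilde{\beta}_k-\tilde{\beta}_{k-1}\sim\mathrm{Exp}(\eta_k)$ from Chen--Yao, Section 6.5, declares the increments independent because execution times are independent, and convolves via Ross. Step 3 is fine, except that distinct $\lambda_i$ do not force distinct $\eta_i$; you must assume that outright. The gap is that Steps 1 and 2 carry all the content. The paper supports them only by a citation and a one-line assertion, and your sketched justifications would not go through, because they contradict your own starting point.

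\textbf{Step 1.} You open by noting that $\hat{\beta}_k$ is a reflected Brownian motion with drift $\Ubar_k-1$ and deviation $v_k$. By the fact you quote, its stationary law is then a \emph{single} exponential of rate $2(1-\Ubar_k)/v_k^2$, whose density is positive at $0$. A sum of $k\geq 2$ independent nondegenerate exponentials has density vanishing at $0$, so the two cannot agree. The ``effective drift $1-u_k$'' is also wrong. $\tau_k$ is served only while $\beta_{k-1}=0$, i.e.\ with long-run capacity $1-\Ubar_{k-1}$, so the slack governing the increment is $1-\Ubar_k$. Your fallback shows this explicitly: modelling level $k-1$ as service interruptions removes capacity $\Ubar_{k-1}$ and adds that level's variance, which returns the same single exponential as for the whole level-$k$ backlog. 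A concrete check at the lowest level $n\geq 2$, with $v_n>0$: the reflected Brownian motion gives $\E[\tilde{\beta}_n]=v_n^2/(2\epsilon)\to\infty$. The claimed law has mean $\sum_i\eta_i^{-1}$, which stays bounded, since $1-u_n=\epsilon+\Ubar_{n-1}$ does not vanish and neither does any other $1-u_i$. ``Normalized units'' do not repair the variance either. For independent $C_k$ and $T_k$, the work input of $\tau_k$ has variance rate $\lambda_k m_k^2(\gamma_{a,k}^2+\gamma_{e,k}^2)$, and no single time unit sets every $m_k$ to $1$.

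\textbf{Step 2.} Independent execution times give independent \emph{inputs}, not independent workloads at a shared server. The task-$k$ backlog drains only while $\beta_{k-1}=0$. So $\tilde{\beta}_{k-1}$ and $\tilde{\beta}_k-\tilde{\beta}_{k-1}$ are both functionals of the same level-$(k-1)$ busy/idle pattern, and they are dependent in general. The state-space-collapse results you plan to invoke point away from decoupling. In priority heavy traffic the higher-priority workloads are negligible on the diffusion scale and the whole limit sits in the lowest class, so one exponential survives, not $k$ comparable ones. For $k<n$ the level-$k$ load need not approach $1$ at all, so the limit $\epsilon\to0$ says nothing about $\pi_k$.

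This is therefore a missing idea, not a technicality you could close with more care. The heavy-traffic setup you start from yields $\tilde{\beta}_k\approx\mathrm{Exp}\bigl(2(1-\Ubar_k)/v_k^2\bigr)$. Neither your sketch nor the paper's appeal to Chen--Yao justifies the per-level exponentials with rates $\eta_k$ and the independence needed for the hypoexponential formula.
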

	
	\begin{proof}
	   From  \cite[Section 6.5, p. 144]{chen2001fundamentals}, we know that when $\epsilon \to 0$, $\tilde{\beta}_k - \tilde{\beta}_{k-1}$ is an exponential variable of parameter $\eta_k$. As execution times are independent, $\tilde{\beta}_i - \tilde{\beta}_{i-1}$ and $\tilde{\beta}_j - \tilde{\beta}_{j-1}$ are independent for any $i \neq j$. Furthermore, $\tilde{\beta}_k =  \tilde{\beta}_1 + \sum_{i=2}^k \tilde{\beta}_i - \tilde{\beta}_{i-1}$, thus we know from \cite{ross2002probability} that the distribution of $\tilde{\beta}_k$ is $$d\pi_k(x) = \prod_{i=1}^k\eta_i \sum_{i=1}^k \frac{\exp\left(- \eta_i x\right)}{\prod_{j=1, j \neq i}^k (\eta_j - \eta_i)} dx$$

	for $x > 0$, which is the convolution of $k$ exponential distributions of parameters $\eta_1, \dots, \eta_k$. See \figurename~\ref{fig:backlog} for an illustration of the probability function of $\pi_k$.% This approximation is also called \textit{Kingsman's formula}.\\
	\end{proof}
	
	It is shown in \cite[Remark 6.17, p. 148]{chen2001fundamentals}, that the error of the estimation stated in \eqref{eq:driftedbacklog} is \begin{equation} \label{eq:error}
	\sup_{s \in [0, t]} \vert \beta_k(s) - \hat{\beta}_k(s) \vert = \mathcal{O}\left((t \log \log t)^{1/4} (\log t)^{1/2} \right)
	\end{equation} 

	Note that \eqref{eq:loynesdistribution} applies to the general case, hence the deterministic case, which means that the approximation of the stationary distribution in the periodic case is given in an explicit formula, without solving any system of equations neither computing a large number of convolutions.

	\begin{figure*}[t]
		\centering
		\subfloat[$\Ubar = 0.625, \Umax=0.833$]{
			\includegraphics[width=0.23\textwidth]{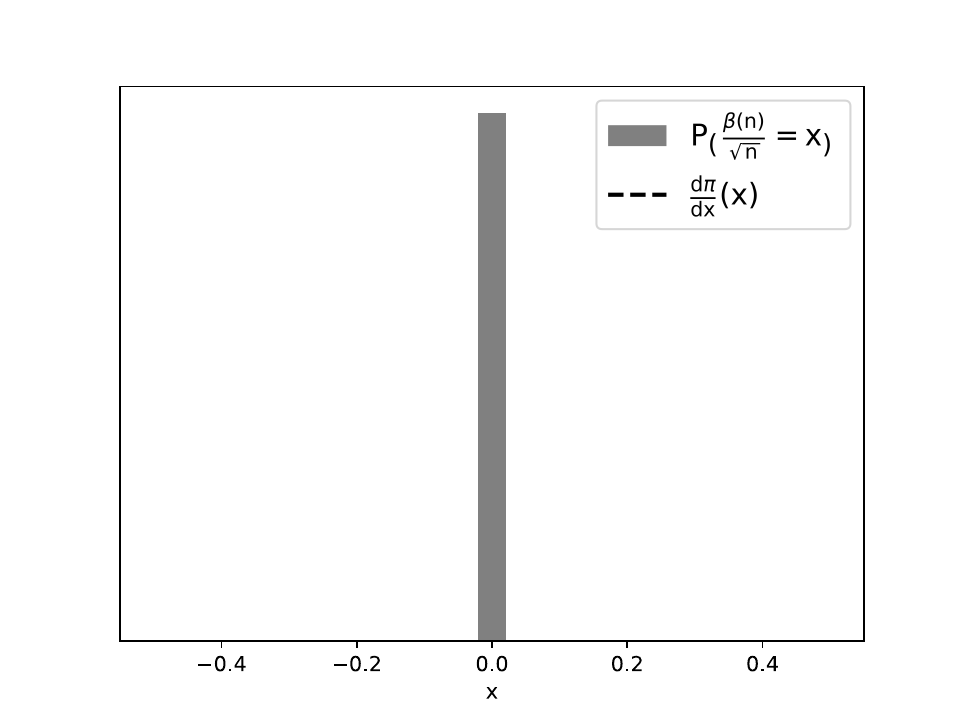}
			\label{fig:stationary1}
		}
		\hfill
		\subfloat[$\Ubar = 0.838, \Umax=1.208$]{
			\includegraphics[width=0.23\textwidth]{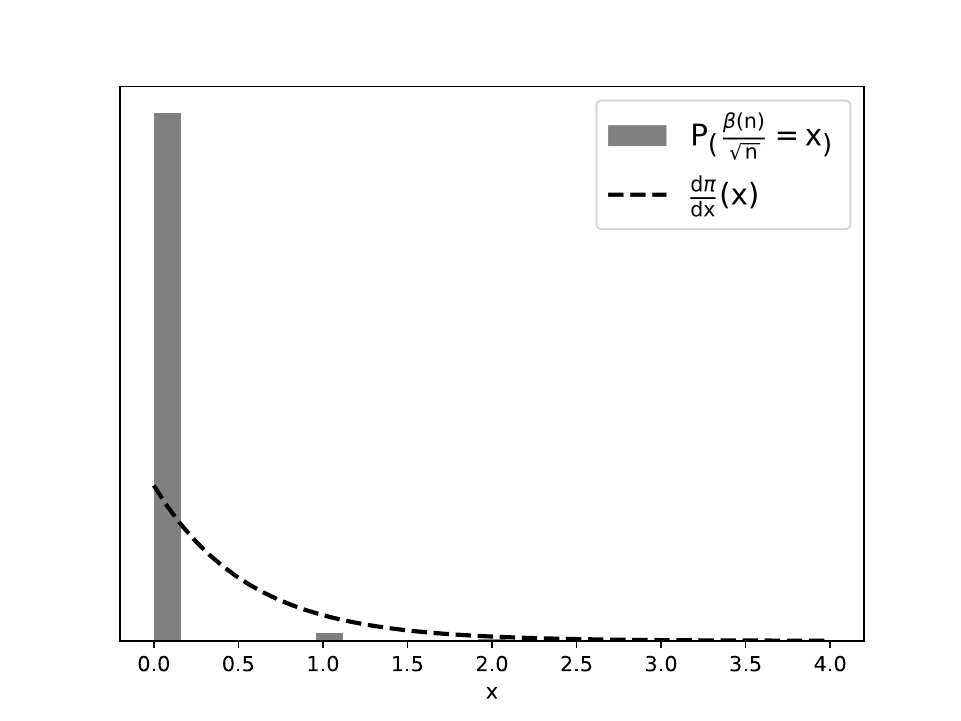}
			\label{fig:stationary2}
		}
		\hfill
		\subfloat[$\Ubar = 0.998, \Umax=1.508$]{
			\includegraphics[width=0.23\textwidth]{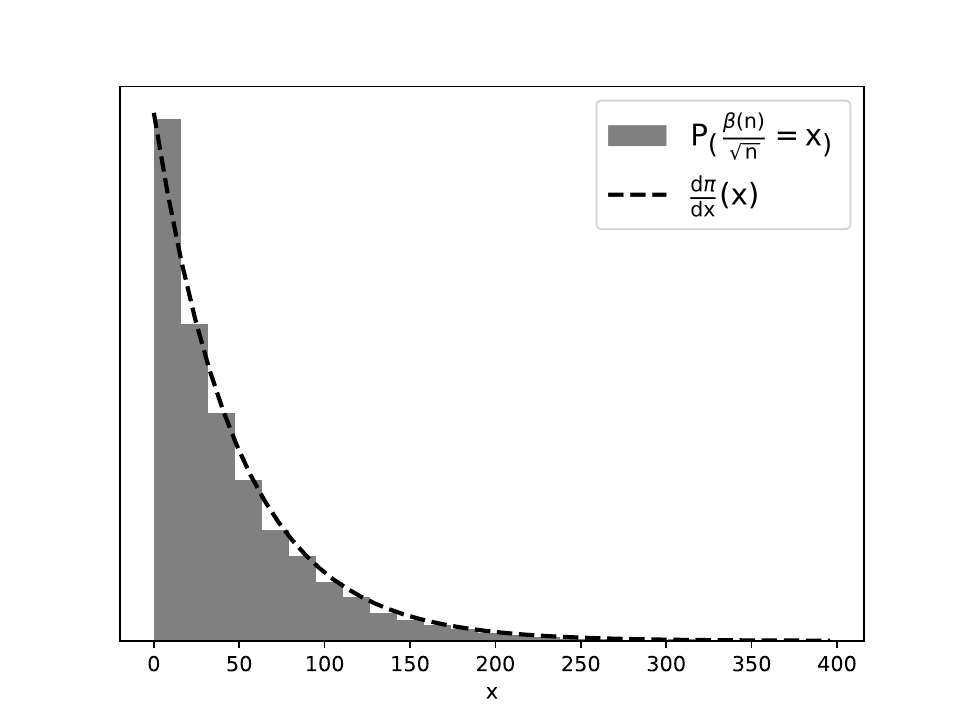}
			\label{fig:stationary3}
		}
		\hfill
		\subfloat[$\Ubar = 1.018, \Umax=1.608$]{
			\includegraphics[width=0.23\textwidth]{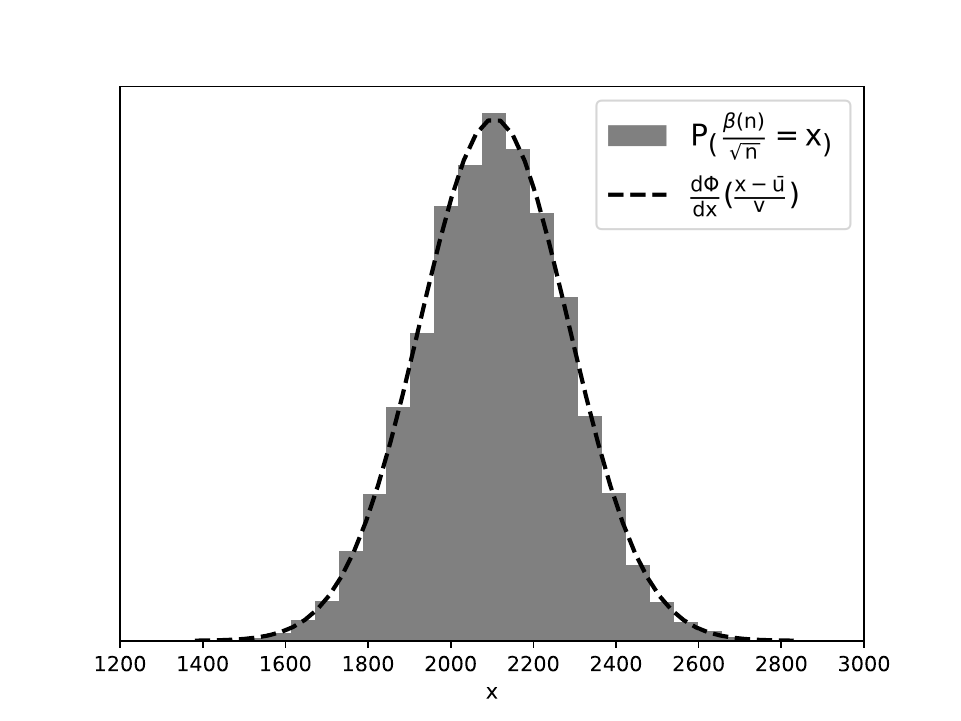}
			\label{fig:stationary4}
		}
		\caption{Steady-state backlog simulations with different mean utilizations in the DK model. In grey the histogram of simulations of $\beta(n)/\sqrt{n}$ for $n=10000$, in black the probability function of $\tilde{\beta}$.}
		\label{fig:backlog}
	\end{figure*}
	
		\subsection{The Diaz and Kim model: periodic workloads} \label{blabla}
	
	\par For the case where the tasks of the real-time system $\Gamma$ are periodic and have deterministic deadlines, \ie $T_k = \lambda_k^{-1} \in \mathbb{N}$ and  $G_k(x) = \mathbf{1}_{[\lambda_k^{-1}, \infty)}(x)$ for all $\tau_k \in \Gamma$ and $x > 0$, also known as the Diaz and Kim (DK) model~\cite{diaz2002stochastic, kim2005exact, diaz2004pessimism}, the authors approximate the distribution function $\pi_k$, resolving linear system equations and compute response times distributions with the help of convolutions.

	The fastest computational complexity of convolutions is $\mathcal{O}(N \log N)$ when $N$ is the number of values that a probabilistic variable can take. Computing the exact values of $\pi_k$ quickly becomes an expensive operation when the number of tasks or the number of possible execution times gets larger, even with methods providing tools to soften those computations like Markovic \etal \cite{markovic2021on},  Milutinovic \etal \cite{milutinovic2015speeding} or Maxim \etal \cite{maxim2012re} for example. Moreover, the computation of response times has the same problem, as the number of possible values of $\tilde{\beta}_k$ quickly becomes large. With \eqref{eq:steadyback} we have an explicit formula of the distribution of $\tilde{\beta}_k$, with \eqref{eq:error} we know the error of the heavy-traffic approximation, and with Theorem~\ref{thm:loynes} we have an analytical expression of the backlog in the deterministic case.
	
	Indeed, let $\bar{T}_k = \operatorname{lcm} \left(\lambda_1^{-1}, \dots, \lambda_k^{-1}\right)$ be the \textit{hyper-period} of level $k$ of $\Gamma$. In the DK model \cite{diaz2002stochastic, kim2005exact}, the authors consider the $k$-level backlog $\beta_k(t\bar{T}_k ), t \in \mathbb{N}$, \ie the remaining demand of level $k$ at the beginning of the $t$-th hyper-period. Diaz \etal \cite{diaz2002stochastic} have proven that the sequence $\left(\beta_k\left(t\bar{T}_k \right)\right)_{t \in \mathbb{N}}$ is a \textit{stationary} Markov chain when  $\Ubar_k < 1$. The sequence $\left(\beta_k(t\bar{T}_k )\right)_{t \in \mathbb{N}}$ is defined by $\beta_k(0) \geq 0$ and $\beta_k\left((t+1)\bar{T}_k\right) = \left( \beta_k(t\bar{T}_k ) + W_k(\bar{T}_k ) - \bar{T}_k  \right)^+$ for $t~\in~\mathbb{N}$.
	
	In order to apply \eqref{eq:steadyback} to the \DK model, we consider a system with only one task with an execution time equal to the total demand of one hyper-period $W_k(\bar{T}_k)$. Indeed, the demand of one hyper-period is repeated at each new hyper-period \cite{diaz2002stochastic}. Let $W_{k, t}, t \in \mathbb{N}$ be independent and identically distributed variables with the same distribution than $W_k(\bar{T}_k )$. Then from \eqref{eq:steadyback} we have \begin{equation}\tilde{\beta}_k \sim \underset{{n \geq 1}}{\sup} \left(\sum_{t=1}^n W_{k, t} - n \bar{T}_k  \right)^+ \label{eq:backlogdeterminist} \end{equation}

	The representation in \eqref{eq:backlogdeterminist} is an efficient method to approximate the stationary distribution $\pi_k$ of the backlog process $\beta_k$. Indeed, let us take an integer $n > 0$, and generate a sample $\left(W_{k,t}\right)_{t=1}^n$ independent and identically distributed sequence with the distribution of $W_k\left(\bar{T}_k\right)$. Eq. \eqref{eq:backlogdeterminist} provides the variable of distribution $\pi_k$ found in Diaz \etal \cite{diaz2002stochastic} and Kim \etal \cite{kim2005exact}. It also means that the variable $\max_{1 \leq j \leq n} \left(\sum_{t=1}^j \left(W_{k, t} - \bar{T}_k\right)  \right)^+$ is an approximation of $\tilde{\beta}_k$ when $n$ is large enough. This method is not expensive in complexity as it requires only to build the distribution function of $W_k\left(\bar{T}_k\right)$ once. 

	\begin{remark}
		In the \DK model, the coefficient of variation $\gamma_{a,k}$ is $0$, thus \begin{equation*}\eta_k = 2 m_k^2\frac{1-u_k}{\lambda_k s_k^2}\end{equation*}
	\end{remark}

	\section{Time demand analysis}  \label{sec:tda}
	
	In this section, we express the response times distribution of a task $\tau_k$ using the heavy-traffic analysis of Section~\ref{sec:demand}. The Markovian property of Brownian motions allows to approximate the distribution of any response time $R_{k,l}$ in terms of the backlog $\hat{\beta}_{k-1}(A_{k,l})$ and the first response time $R_{k,1}$, thus response times will be conditioned to backlogs and execution times, and represented as idle times following the inverse Gaussian distribution. In a second part, we provide an analytical expression of the worst-case heavy-traffic response time distribution, and in a third part we do the same for the steady-state heavy-traffic response time distribution. Finally, we explain how to simulate heavy-traffic response times of a task $\tau_k$ from the distribution functions $F_k$ and $G_k$.
	
    The response time $R_{k,l}$ of a job $\tau_{k,l}$ is the smallest instant after its arrival time $A_{k,l}$ smaller than the time it is given to run the level $k-1$ demand and its execution time, \ie the smallest $t >0$ such that  \begin{equation} B_{k-1}(A_{k,l}) + C_{k,l} +  W_{k-1}(t + A_{k,l}) - W_{k-1}(A_{k,l})  \leq t \label{eq:TDA0} \end{equation} where $B_{k}(t) = \sum_{i=1}^{k} \min\left(\beta_{i}(t) - \beta_{i-1}(t), C_{i}(t) \right)$ is the \textit{blocking time of level $k$} at the instant $t >0$, $\beta_0(t) = 0$, and $C_i(t)$ is the execution time of the most recent job of $\tau_i$ released before $t$, for all $t > 0$. This relation holds for any fixed-priority preemptive single-core model with the discarding policy discussed in Section~\ref{sec:indep}. By definition, the blocking time is such that \begin{equation}B_k(t) \leq \min\left(\beta_k(t), \sum_{i=1}^k C_i(t)\right) \label{eq:bound}\end{equation} for all $t>0$, which makes our response time analysis build upper-bounds of response times.
    
    Furthermore, the heavy-traffic demand $\hat{W}_{k-1}$ is a Brownian motion, thus  $\hat{W}_{k-1}(t+A_{k,l})-\hat{W}_{k-1}(A_{k,l}) \sim \hat{W}_{k-1}(t)$, and it is continuous. Then we define the \textit{heavy-traffic response time} of a job $\tau_{k,l}$ as \begin{equation}  \hat{R}_{k,l} \sim \inf_{t > 0} \left\{  \hat{\beta}_{k-1}(A_{k,l}) + C_{k,l} +  \hat{W}_{k-1}(t) = t \right\} \label{eq:TDA} \end{equation}
	
	In the following sections we consider response times as \textit{idle times} properly conditioned, and from this analysis provide the worst-case response time distribution and the steady-state response time distribution.
	
	\subsection{First idle time} \label{sec:relaxation}

	\begin{figure}[t]
		\centering
			\includegraphics[width=\linewidth]{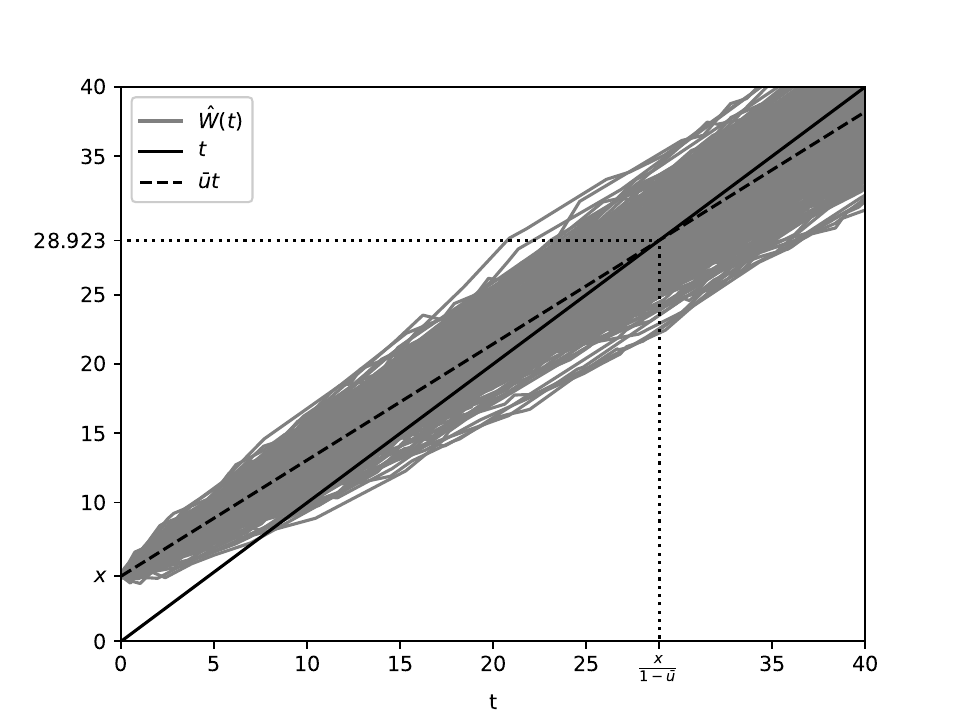}
			\label{fig:mean_response_time}
		\caption{Trajectories of $\hat{W}$ and average of the first idle time for $\Ubar = 0.838$.}
	\end{figure}

	In the following, we consider the conditional probability that the system starts with a level $k-1$ backlog $x \geq 0$ and the job $\tau_{k,1}$, $$\Pr_k^x( \cdot ) = \Pr( \cdot \mid \beta_{k-1}(0) = x, A_{k,1}=0)$$ Whenever we need to suppose $\beta_{k-1}(0) = x, A_{k,1}=0$, we say that we work \textit{under the probability} $\Pr^x_k$.

	As the steady state of backlogs repeats itself infinitely, the analysis of response time stands for all future response times. In the transient state though, each response time has its own distribution function. Hence, in Section~\ref{subsec:transient} we use the worst-case reasoning to bound the \textit{transient response times}. A question rises at this point: when does the system get steady ? Let us look closely at the dynamics of the heavy-traffic backlog process. Let $\hat{\beta}(t) \triangleq \max_{\tau_k \in \Gamma}  \hat{\beta}_k(t)$, $\hat{W}(t) \triangleq \max_{\tau_k \in \Gamma} \hat{W}_k(t)$ and $v \triangleq \max_{\tau_k \in \Gamma}  v_k$. As we work with fixed-priority scheduling policies, $\hat{\beta}, \hat{W}$ and $v$ refer to the lowest priority level.

	\begin{proposition} \label{prop:idleissteady}
	    Let $\Ubar < 1, x > 0$ and $\beta(0) = x$. The steady-state backlog is reached at the first idle time $$\I^x \triangleq \inf \left\{ t > 0 : \hat{\beta}(t)  = 0 \right\}$$
	\end{proposition}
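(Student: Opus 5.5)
We would prove the statement by a regeneration (coupling) argument built on the Loynes representation of Theorem~\ref{thm:loynes} and on the strong Markov property of the heavy-traffic backlog $\hat{\beta}$. By the discussion after \eqref{eq:driftedbacklog}, $\hat{\beta}$ is a Brownian motion reflected at the origin with drift $\Ubar - 1 < 0$ and deviation $v$. The idea is that at the first idle time the process has ``forgotten'' the initial condition $\beta(0) = x$. From that instant on it evolves exactly as the backlog started empty, and the empty-started process is the one whose law is given by the Loynes construction \eqref{eq:steadyback}.

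\textbf{Step 1: the idle time is finite.} Since $\Ubar < 1$, the free process $x + \hat{W}(t) - t$ is a Brownian motion with negative drift $\Ubar - 1$ started at $x > 0$. Before any idle instant the integral term in \eqref{eq:driftedbacklog} equals $t$, so $\hat{\beta}(t) = x + \hat{W}(t) - t$ on $[0, \I^x]$. Hence $\I^x$ is the hitting time of $0$ by a negatively drifted Brownian motion. This hitting time is almost surely finite and inverse Gaussian with parameters $\theta = x/(1-\Ubar)$ and $\gamma = x^2/v^2$, which matches the form \eqref{eq:inverseGaussienne}. This is consistent with the last statement of Theorem~\ref{thm:loynes}, which guarantees infinitely many idle times.

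\textbf{Step 2: strong Markov restart.} $\I^x$ is a stopping time for the filtration of $B$, and $\hat{\beta}(\I^x) = 0$. By the strong Markov property, the process $\hat{\beta}(\I^x + \cdot)$ is a reflected Brownian motion started at $0$ and independent of $\mathcal{F}_{\I^x}$. Its law therefore no longer depends on $x$: it coincides with that of the backlog of a system started empty. We will also use that a reflected Brownian motion with negative drift started at $0$ is the Skorokhod reflection $\sup_{s \le t}(Y(t) - Y(s))$, where $Y(t) = \hat{W}(t) - t$. By time reversal of the increments, this is equal in law to $\sup_{0 \le s \le t} Y(s)$, the finite-horizon version of \eqref{eq:steadyback}. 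It increases monotonically to $\tilde{\beta}$.

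\textbf{Step 3: identification with $\tilde{\beta}$.} Here we interpret Definition~\ref{def:steady} as the statement that, from $\I^x$ on, the process is the stationary (Loynes) process, with initial-condition effects erased. With this reading the conclusion follows from Step 2 together with uniqueness of the stationary regime in Theorem~\ref{thm:loynes}. A coupling makes this concrete: couple the $x$-started process with a stationary version driven by the same $B$. The two coalesce at or before $\I^x$, because once the larger one hits $0$ the smaller one also does, and reflected processes with common driving noise then stay equal. After the coalescence time, $\hat{\beta}(t) \sim \tilde{\beta}$ for every $t$, which is the claim.

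\textbf{Main obstacle.} The delicate step is Step 3. Strictly, the empty-started process reaches the law $\pi$ only as $t \to \infty$, not at an instant. So we must argue that, in the sense of Definition~\ref{def:steady}, the state reached at $\I^x$ is the stationary regime. We would first try the monotone coupling with the stationary version started from an independent draw of $\tilde{\beta}$, taking the maximum of the two initial conditions. If a sharp equality in law at $\I^x$ cannot be obtained, we would settle for showing that the stationary regime is entered at $\I^{\max(x, \tilde{\beta})}$, and that $\I^x$ is its representative.
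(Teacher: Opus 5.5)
Your Steps 1 and 2 are correct, and together they are essentially the whole of the paper's argument. The paper observes that at $\I^x$ the process restarts from $0$, so the next idle time (and the third, the fourth, \dots) has the law of $\I^0$, and it declares this sufficient for steadiness after $\I^x$. It then handles the identification step not by a coupling but by citing \cite[Theorem 6.8]{chen2001fundamentals} for the claim that $\tilde\beta=0$ when $\Ubar<1$, so that $\hat\beta(\I^x)=0$ is trivially the steady state.

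The genuine gap in your proposal is Step 3. Your claim that the $x$-started process and the stationary version coalesce at or before $\I^x$ is false. With common noise $Y(t)=\hat W(t)-t$, the process started at $x$ is $\max\{x+Y(t),\,Y(t)-\inf_{s\le t}Y(s)\}$. Two such processes therefore coincide exactly from the time the one with the larger initial value hits $0$, i.e.\ from $\I^{\max(x,\tilde\beta)}$. This is strictly later than $\I^x$ on the event $\{\tilde\beta>x\}$, which has probability $1-\pi(x)>0$.

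No other argument can repair this under the literal reading $\hat\beta(t)\sim\tilde\beta$ of Definition~\ref{def:steady}. First, $\hat\beta(\I^x)=0$, whereas the law $\pi$ of Proposition~\ref{prop:steadydistribution} has no atom at $0$. Second, by your own Step 2, conditionally on $\mathcal F_{\I^x}$ the value at $\I^x+s$ has the law of $\sup_{u\le s}Y(u)$. That law is stochastically strictly smaller than $\tilde\beta$ for every finite $s$. So the fallback ``$\I^x$ is its representative'' does not prove the statement.

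There are two honest ways out. The first is to adopt the paper's regenerative reading of ``steady'': after $\I^x$ the future is independent of $\mathcal F_{\I^x}$, does not depend on $x$, and coincides in law with the system started empty. Under that reading, Steps 1--2 already constitute the proof and Step 3 should be deleted. Note, however, that the paper's extra appeal to $\tilde\beta=0$ does not supply the distributional identity you were after either. It holds at fluid scale, where the fluid backlog $(x-(1-\Ubar)t)^+$ vanishes from $x/(1-\Ubar)$ on. Read as a statement about the Loynes limit $\tilde\beta$ itself, it is incompatible with Proposition~\ref{prop:steadydistribution}.

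The second way out is to keep your coupling and state what it actually proves: for every deterministic $t$,
\[
\sup_{A}\bigl|\Pr(\hat\beta(t)\in A)-\pi(A)\bigr|\le \Pr\bigl(\I^{\max(x,\tilde\beta)}>t\bigr).
\]
This is a quantitative convergence to the steady state in the spirit of Proposition~\ref{prop:trel}. It is not attainment of the steady state at the random instant $\I^x$.
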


    \begin{proof}
        For any $x>0$, at the instant $\I^x$, the next idle time has the distribution of $\I^0$, so is the third, the fourth, \textit{etc}. This is sufficient to say that the system is steady for all $t > \I^x$.  Moreover, one can see in \cite[Theorem 6.8]{chen2001fundamentals} that when $\Ubar < 1$ the steady-state backlog $\tilde{\beta}$ is $0$. 
    \end{proof}

	Proposition~\ref{prop:idleissteady} is coherent with the proven results in the deterministic case \cite{diaz2002stochastic} and the simulations shown in \figurename~\ref{fig:demand}. Let us now refine the analysis for each task $\tau_k \in \Gamma$. Let $\I_k^x$ be the \textit{first idle time of level $k$}, \ie \begin{equation} \I_k^x = \inf \left\{ t > 0 : \hat{\beta}_k(t)  = 0 \right\} \label{eq:defidle} \end{equation} when the initial backlog of level $k$ is equal to $x > 0$.
	
	\begin{figure}[t]
		\centering
			\includegraphics[width=\linewidth]{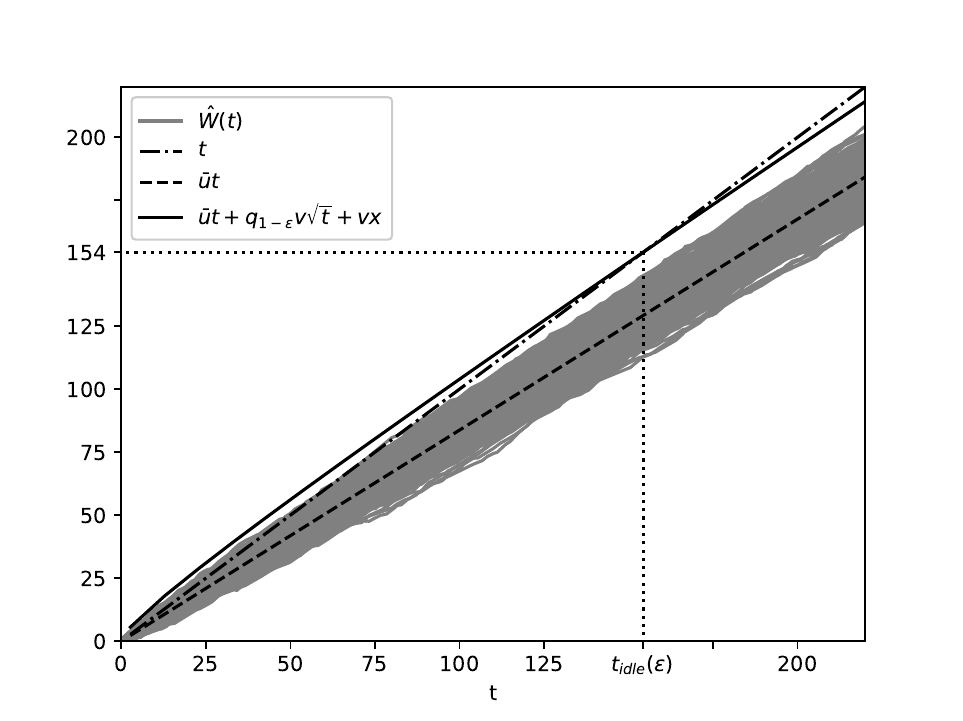}
			\label{fig:relaxation}
		\caption{First maximum $\epsilon$-idle time for $\epsilon=10^{-6}$. for $\Ubar = 0.838$.}
	\end{figure}
	
	\begin{lemma} \label{lem:inverseGauss}
	    The distribution of $\I_k^x$ is an inverse Gaussian distribution with probability function $$\psi_k^x(t) \triangleq \psi\left(t ; \frac{x}{1 - \Ubar_{k}}, \frac{x^2}{v^2_{k}}\right), t > 0$$
	\end{lemma}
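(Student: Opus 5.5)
Before the first idle time the indicator $\mathbf{1}_{\{\hat{\beta}_k(s)>0\}}$ in \eqref{eq:driftedbacklog} equals $1$, so the reflection plays no role. The plan is therefore to reduce $\I_k^x$ to the first passage time of a drifted Brownian motion through a fixed level, and then use the classical first-passage density.

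\textbf{Step 1 (unreflected form).} Under the initial condition $\hat{\beta}_k(0)=x>0$, for every $t<\I_k^x$ we have $\int_0^t \mathbf{1}_{\{\hat{\beta}_k(s)>0\}}ds = t$. Hence on $[0,\I_k^x]$,
\[
\hat{\beta}_k(t) = x + \hat{W}_k(t) - t = x - \bigl((1-\Ubar_k)t - v_k B(t)\bigr),
\]
using \eqref{eq:drift} from Theorem~\ref{lem:heavy}. Write $X(t) \triangleq (1-\Ubar_k)t + v_k \tilde{B}(t)$ with $\tilde{B}=-B$, which is again a standard Brownian motion. Then $\I_k^x = \inf\{t>0 : X(t) = x\}$, the hitting time of level $x$ by a Brownian motion with positive drift $\mu \triangleq 1-\Ubar_k>0$ and deviation $v_k$. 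The drift is positive because $\Ubar_k\le\Ubar<1$. Continuity of paths ensures the infimum is attained, and the argument only uses the process up to this time, so the definitions are consistent.

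\textbf{Step 2 (first-passage law).} Next I would derive the law of $\tau_x \triangleq \inf\{t : X(t)=x\}$. The cleanest route is the exponential martingale $M_t = \exp\bigl(\theta X(t) - (\theta\mu + \theta^2 v_k^2/2)t\bigr)$.

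1. Optional stopping at $\tau_x\wedge n$, then letting $n\to\infty$, is justified because $X$ is bounded above by $x$ before $\tau_x$ and $\tau_x<\infty$ almost surely since $\mu>0$.
2. This yields $\E[e^{-s\tau_x}] = \exp\bigl(x(\mu - \sqrt{\mu^2+2 s v_k^2})/v_k^2\bigr)$ for $s\ge0$.
3. This is exactly the Laplace transform of the inverse Gaussian law with mean $\theta = x/\mu$ and shape $\gamma = x^2/v_k^2$.
4. Alternatively, apply the reflection principle together with Girsanov's change of measure to obtain the density $\frac{x}{v_k\sqrt{2\pi t^3}}\exp\bigl(-(x-\mu t)^2/(2v_k^2 t)\bigr)$ directly.

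\textbf{Step 3 (matching parameters).} Check that this density coincides with \eqref{eq:inverseGaussienne} for $\theta = x/(1-\Ubar_k)$ and $\gamma = x^2/v_k^2$. Indeed $\sqrt{\gamma/(2\pi t^3)} = x/(v_k\sqrt{2\pi t^3})$, and
\[
\frac{\gamma(t-\theta)^2}{2\theta^2 t} = \frac{x^2 (t - x/\mu)^2 \mu^2}{2 v_k^2 x^2 t} = \frac{(\mu t - x)^2}{2 v_k^2 t}.
\]
This gives $\psi_k^x$.

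The main obstacle is Step 2. One needs either to justify optional stopping for the unbounded time $\tau_x$, handled by choosing the root $\theta$ that keeps $M$ bounded on $[0,\tau_x]$, or to carry out the Girsanov/reflection computation. I would use the Laplace-transform route and invert via uniqueness of Laplace transforms against the known inverse Gaussian transform. A minor point to remark on is that the lemma tacitly treats the heavy-traffic process as the exact model; the statement is exact for $\hat{\beta}_k$, not for $\beta_k$.
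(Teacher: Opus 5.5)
Your proof is correct and follows the paper's route. Before the first idle time the indicator is identically $1$, so $\I_k^x$ is the first passage time to level $x$ of the drifted Brownian motion $(1-\Ubar_k)t - v_k B(t)$, which is inverse Gaussian with mean $x/(1-\Ubar_k)$ and shape $x^2/v_k^2$; the only difference is that you derive this classical first-passage law (exponential martingale, optional stopping, Laplace transform) and check the parameter match explicitly, whereas the paper simply cites it.
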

	
	\begin{proof}
	    The idle time $\I_k^x$ defined in \eqref{eq:defidle} is a quantity called \textit{first-passage time} of a Brownian motion \cite[Eq. (28)]{molini2011first}. Until the first idle time, we know that $\hat{\beta}_k(t) = x + \hat{W}_k(t) - t$, because $\mathbf{1}_{\beta_k(s) > 0}=1$ for $0 \leq s < t \leq \I^x_k$. Then from \eqref{eq:drift}  we have $\I_k^x~\sim~\inf \left\{t > 0 : B(t) = t(1 -\Ubar_k) / v_k - x / v_k \right\}$ where $B$ is a standard Brownian motion. When $\beta_k(0) = x$, the distribution of $\I_k^x$ is an inverse Gaussian distribution of mean $x/(1-\Ubar_k)$ and shape $x^2/v_k^2$. See \figurename~\ref{fig:mean_response_time} and \cite[p. 146]{jeanblanc2009mathematical}, for more details.\end{proof}
	The first idle time is also a probabilistic variable, but we can build an instant $t_{idle}(\epsilon, x)$ such that for any $\epsilon > 0$ and all $t > t_{idle}(\epsilon, x)$, the probability that the first idle time of the lowest priority level is greater than $t$ is less than $\epsilon$, \ie $\Pr\left(\I^x > t \right) \leq \epsilon$. We call \textit{first $\epsilon$-idle time} the instant $t_{idle}(\epsilon, x)$ from which we can guarantee the system is steady with probability $1-\epsilon$ and initial backlog $x > 0$.  We know that $\hat{\beta}$ is a Brownian motion reflected at the origin. This means that until it reaches zero, \ie an idle time, it is a simple Brownian motion. In fact, between two consecutive idle times the backlog process has the same dynamic. The following proposition bounds with a given probability the first idle time and  the instant the system gets steady.
	
	\begin{proposition} \label{prop:trel} Let $\epsilon \in (0, 1)$ and $x \geq 0$. If $\Ubar < 1$, the system of initial demand $x$ is steady with probability at least $1 - \epsilon$ at the instant \begin{equation} t_{idle}(\epsilon, x) = \left( \frac{q_{1-\epsilon} + \sqrt{q_{1-\epsilon}^2 + 4\left(\frac{1-\Ubar}{v } \right) x}}{2\left(\frac{1-\Ubar}{v } \right)} \right)^2 \label{eq:relaxation} \end{equation} where $q_{1-\epsilon}=\Phi^{-1}(1-\epsilon)$ is the $(1-\epsilon)$-quantile  of a standard normal distribution.
	\end{proposition}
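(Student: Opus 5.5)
By Proposition~\ref{prop:idleissteady}, the system is steady from the first idle time $\I^x$ onwards. It therefore suffices to find an instant $t$ such that $\Pr(\I^x > t) \leq \epsilon$, since at every later instant the system is steady with probability at least $1-\epsilon$. I will work with the lowest priority level, whose heavy-traffic backlog $\hat{\beta}$ has drift $\Ubar - 1$ and deviation $v$. Before the first idle time the reflection term is inactive, exactly as in the proof of Lemma~\ref{lem:inverseGauss}, so there
\begin{equation*}
\hat{\beta}(t) = x + (\Ubar - 1)t + v B(t),
\end{equation*}
with $B$ a standard Brownian motion.

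The key step is the event inclusion $\{\I^x > t\} \subseteq \{\hat{\beta}(t) > 0\}$, taken for the unreflected process: if no idle time has occurred by $t$, the free Brownian motion is still positive at time $t$. This gives
\begin{equation*}
\Pr(\I^x > t) \leq \Pr\left(B(t) > \frac{(1-\Ubar)t - x}{v}\right) = 1 - \Phi\left(\frac{(1-\Ubar)\sqrt{t}}{v} - \frac{x}{v\sqrt{t}}\right),
\end{equation*}
using $B(t) \sim \sqrt{t}\,B(1)$. This is the one-point (Gaussian) bound. One could instead use the exact inverse Gaussian tail from Lemma~\ref{lem:inverseGauss}, but the closed form \eqref{eq:relaxation} suggests only the normal quantile is intended.

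It then suffices to require the argument of $\Phi$ to be at least $q_{1-\epsilon}$. Put $a = (1-\Ubar)/v$ and $s = \sqrt{t}$. Matching the form of \eqref{eq:relaxation}, which has no $v$ attached to $x$, I would normalize the threshold so that the condition reads $a s - x/s \geq q_{1-\epsilon}$. Multiplying by $s>0$ gives the quadratic inequality $a s^2 - q_{1-\epsilon} s - x \geq 0$. Its positive root is
\begin{equation*}
s_+ = \frac{q_{1-\epsilon} + \sqrt{q_{1-\epsilon}^2 + 4 a x}}{2a},
\end{equation*}
and since the quadratic is increasing beyond $s_+$, the condition holds for every $s \geq s_+$. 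Squaring yields $t_{idle}(\epsilon, x) = s_+^2$, and monotonicity gives the bound for all $t \geq t_{idle}(\epsilon,x)$. The case $x=0$ is immediate from the same formula.

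The main obstacle is the scaling of $x$ against $v$. Done literally, the Gaussian bound produces $x/v$ in the discriminant, that is $4ax/v$ rather than $4ax$. I would handle this by working with the backlog normalized by $v$, as in the heavy-traffic rescaling, or by treating $x$ as already expressed in units of $v$. If neither interpretation holds up, I would present the formula with $x/v$ and note that it coincides with \eqref{eq:relaxation} after normalization. The Gaussian step itself is fine: the inclusion $\{\I^x > t\} \subseteq \{\hat{\beta}(t) > 0\}$ only loosens the bound, so using it is safe.
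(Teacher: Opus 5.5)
Your argument is the paper's argument. The paper also passes from $\Pr(\I^x > t) = \Pr\bigl(\inf_{s \in [0,t]} \hat{W}(s) - s > -x\bigr)$ to the one-point bound $\Pr(\hat{W}(t) > t - x) = 1 - \Phi\bigl(((1-\Ubar)t - x)/(v\sqrt{t})\bigr)$, and then solves a quadratic in $\sqrt{t}$. Your event inclusion, the use of the monotonicity of $\Phi$, the choice of the positive root and the monotonicity beyond $s_+$ are all fine. The scaling mismatch you flag is real, and the paper does not resolve it either. When it rewrites the condition as \eqref{eq:relaxtime}, it writes $t > \Ubar t + q_{1-\epsilon} v \sqrt{t} + vx$, whereas the Gaussian bound gives $+x$. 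That unexplained factor $v$ is exactly what turns $4ax/v$ into $4ax$ (with $a = (1-\Ubar)/v$) in \eqref{eq:relaxation}.

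So the ``normalization'' cannot be a step of the proof, because it silently changes the statement. What the argument actually establishes is $\Pr(\I^x > t) \le \epsilon$ for all $t \ge t^\ast$, where $\sqrt{t^\ast} = \bigl(q_{1-\epsilon} + \sqrt{q_{1-\epsilon}^2 + 4ax/v}\bigr)/(2a)$. This root increases with the constant term, and $\Pr(\I^x > t)$ is nonincreasing in $t$. Hence \eqref{eq:relaxation} as stated follows when $v \ge 1$ or $x = 0$, since the stated value then dominates $t^\ast$.

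For $v < 1$ the stated formula is not just unproved; it can fail. As $v \to 0$ with $x > 0$ fixed, the stated $t_{idle}(\epsilon, x)$ tends to $0$. Meanwhile $\I^x$, which is inverse Gaussian with mean $x/(1-\Ubar)$ and shape $x^2/v^2$ by Lemma~\ref{lem:inverseGauss}, concentrates at $x/(1-\Ubar)$, so $\Pr(\I^x > t_{idle}(\epsilon, x)) \to 1$. Concretely, $\Ubar = v = 1/2$, $x = 10$, $\epsilon = 10^{-2}$ give $t_{idle}(\epsilon,x) \approx 20.5$ against a mean idle time of $20$, a tail probability of about $0.4$.

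The stated formula is also dimensionally inconsistent: $q_{1-\epsilon}^2$ is a pure number, while $ax$ scales like the square root of time. By contrast, $ax/v$ is dimensionless. Your fallback, stating the result with $x/v$ in place of $x$, is the correct version of the proposition.
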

	
	\begin{proof}
	    Let us consider the lowest priority first idle time $\I^x$ and remark that \begin{align}\Pr(\I^x > t) &=  \Pr \left(\inf_{s \in [0, t]} \hat{W}(s) - s > -x  \right) \label{eq:idlek} \\ &\leq  \Pr\left(\hat{W}(t) > t-x \right) \nonumber \\ &= \Pr\left(B(t) > t \frac{1 -\Ubar}{v} - \frac{x}{v} \right) \nonumber \\ & = 1 - \Phi \left( \frac{(1-\Ubar) t - x}{v \sqrt{t}} \right) \nonumber \end{align}  Hence, let us define $t_{idle}(\epsilon, x)$ such that  \begin{equation*} 1 - \Phi \left( \frac{(1-\Ubar) t_{idle}(\epsilon, x) - x}{v \sqrt{t_{idle}(\epsilon, x)}} \right) \leq \epsilon \end{equation*} which implies that for $t > t_{idle}(\epsilon, x)$ we should have \begin{equation}t > \Ubar t + \Phi^{-1}(1-\epsilon) v \sqrt{t} + v x \label{eq:relaxtime} \end{equation} see \figurename~\ref{fig:relaxation}. This is a second order polynomial equation, which admits no solution when $\Ubar > 1$, and, when $\Ubar \leq 1$ and $\epsilon \in (0,1)$, the smallest solution of \eqref{eq:relaxtime} is \eqref{eq:relaxation}. 
	\end{proof}

	Note that for all $x \geq 0$, $\lim_{\epsilon \to 0} t_{idle}(\epsilon, x) = \infty$, $\lim_{\epsilon \to 1} t_{idle}(\epsilon, x) = 0$, and 
	\begin{itemize}
		\item $\Ubar < 1 \implies t_{idle}(\epsilon, x) < \infty$,
		\item $\Ubar = 1 \implies t_{idle}(\epsilon, x) = \infty$,
		\item $\Ubar > 1 \implies t_{idle}(\epsilon, x)$ is not defined.
	\end{itemize}
	
	\begin{remark}
		We provide the analysis for the lowest priority level, but in fact each priority level has its own first $\epsilon$-idle time.
	\end{remark}
	
	Finally we consider the \textit{maximum first $\epsilon$-idle time} $t^{max}_{idle}(\epsilon)$ by considering a synchronous activation for the distribution $\mu^{max}$ of the initial demand, \ie $\beta(0)=\sum_{\tau_k \in \Gamma} C_k$. See Proposition~\ref{prop:LL} for a more detailed explanation. The maximum first $\epsilon$-idle time is then \begin{equation}
	t_{idle}^{max}(\epsilon) = \int t_{idle}(\epsilon, x) d\mu^{max}(x) \label{eq:trelmax}
	\end{equation}

	\subsection{Conditioning response times}
	
	From \eqref{eq:TDA}, any job $\tau_{k,l}$ has a heavy-traffic response time distribution that can be expressed from the response time $\hat{R}_{k,1}$,
	\begin{equation}\Pr \left(\hat{R}_{k,l} > t \mid \hat{\beta}_{k-1}(A_{k,l}) = x \right) = \Pr^x_{k} \left(\hat{R}_{k,1} > t \right) \label{eq:markov} \end{equation} Let us condition this probability for specific  values of execution times.

	\begin{lemma}\label{eq:tda2} Under $\Pr^x_k$ the heavy-traffic response time $\hat{R}_{k,1}$ given $C_{k,1}=y$ is identically distributed as  $\I^{x+y}_{k-1}$.
	\end{lemma}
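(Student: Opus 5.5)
We will read the response time off the definition \eqref{eq:TDA} directly. Under $\Pr^x_k$ we have $A_{k,1}=0$ and $\hat{\beta}_{k-1}(0)=x$. Conditioning further on $C_{k,1}=y$ turns \eqref{eq:TDA} into
\begin{equation*}
\hat{R}_{k,1} \sim \inf\left\{ t>0 : x+y+\hat{W}_{k-1}(t) = t \right\}.
\end{equation*}
The conditioning is harmless for $\hat{W}_{k-1}$. By Remark~\ref{sec:indep}, $C_{k,1}$ is independent of the execution times of higher priority jobs, and $\hat{W}_{k-1}$ is built only from those jobs. So $\hat{W}_{k-1}$ is still, by Theorem~\ref{lem:heavy} and \eqref{eq:drift}, a Brownian motion of drift $\Ubar_{k-1}$ and deviation $v_{k-1}$.

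Next we compare this with the first idle time of level $k-1$ when the initial backlog is $x+y$. Consider the process $\hat{\beta}_{k-1}$ of \eqref{eq:driftedbacklog} started at $x+y$. Until its first zero the indicator $\mathbf{1}_{\{\hat{\beta}_{k-1}(s)>0\}}$ equals $1$, exactly as in the proof of Lemma~\ref{lem:inverseGauss}. On that interval, therefore,
\begin{equation*}
\hat{\beta}_{k-1}(t) = x+y+\hat{W}_{k-1}(t)-t .
\end{equation*}
It follows that
\begin{equation*}
\I^{x+y}_{k-1} = \inf\left\{t>0 : x+y+\hat{W}_{k-1}(t)-t = 0\right\}.
\end{equation*}
This is the same functional of a Brownian motion with the same drift and deviation as in the display for $\hat{R}_{k,1}$. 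The two variables are therefore identically distributed.

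The one point needing care is the interpretation. The backlog at level $k-1$ is $x$, while the idle time in the statement refers to level $k-1$ with initial backlog $x+y$. We will justify this by noting that, in the preemptive fixed-priority setting, the job $\tau_{k,1}$ only adds its work $y$ to the workload that must be cleared before it completes. The first time the combined workload $x+y+\hat{W}_{k-1}(t)$ is absorbed by elapsed time is thus precisely a first passage of the $(k-1)$-level demand started from $x+y$.

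Continuity of Brownian paths guarantees that the infimum is attained, so the equality ``$=t$'' in \eqref{eq:TDA} coincides with the hitting of zero in \eqref{eq:defidle}. As a by-product, Lemma~\ref{lem:inverseGauss} gives the conditional density $\psi^{x+y}_{k-1}$ of $\hat{R}_{k,1}$.
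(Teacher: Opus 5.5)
Your proposal is correct and is essentially the paper's argument. Under $\Pr^x_k$ and $C_{k,1}=y$, both you and the paper read \eqref{eq:TDA} as the first passage of $x+y+\hat{W}_{k-1}(t)-t$ to zero, which is exactly the pre-idle dynamics of $\hat{\beta}_{k-1}$ started at $x+y$, so $\Pr^x_k(\hat{R}_{k,1}>t \mid C_{k,1}=y)=\Pr(\inf_{s\in[0,t]}\hat{W}_{k-1}(s)-s>-x-y)=\Pr(\I^{x+y}_{k-1}>t)$. One small point: Remark~\ref{sec:indep} only guarantees independence of execution times, while $\hat{W}_{k-1}$ also depends on higher-priority arrivals, which the model allows to be correlated with $T_{k,1}$ and hence with $C_{k,1}$, so your claim that conditioning leaves the law of $\hat{W}_{k-1}$ unchanged is an implicit assumption (one the paper also makes silently) rather than a consequence of that remark.
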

	
	\begin{proof}
		As stated in \eqref{eq:markov}, the proper conditioning on backlogs  provides the distribution of the response time of $\tau_{k,1}$. Furthermore, when the backlog $\hat{\beta}_{k-1}(A_{k,l}) = x$ and the execution time $C_{k,l}=y$, the response time $\hat{R}_{k,l}$ is the time it takes for all level $k-1$ jobs to finish plus the time it takes for level $k-1$ to stay idle for $x+y$ instants. It means that we can artificially set the initial backlog to $x+y$ and look at $\I_{k-1}^{x+y}$, the first idle time of level $k-1$, as represented in \figurename~\ref{fig:repidle}. In other words, \begin{multline}
		\Pr^x_{k} \left(\hat{R}_{k,1} > t \mid C_{k,1} = y \right) \\ = \Pr \left(\inf_{s \in [0, t]} \hat{W}_{k-1}(s) - s > -x-y \right)   = \Pr \left(\I_{k-1}^{x+y} > t\right) \nonumber
		\end{multline} which is sufficient to conclude.
	\end{proof}
	
	\begin{example}[Response times as idle times]\label{ex:repidle}
	    Consider a task set $\{\tau_1, \tau_2\}$, and that $\tau_2$ is activated at $t=0$, \ie $A_{2,1}=0$, and $C_{2,1}=y=3$. Let  $\Pr(C_1=1)=1/2, \Pr(C_1=2)=1/2$, $\Pr(T_1=2)=1/2$, $\Pr(T_1=4)=1/2$, and suppose $\beta_1(0) = x+y=8$, $C_{1,1} = 1$, $C_{1,2}=2$, $C_{1,3}=2$, and $A_{1,1}=~2, A_{1,2}=~6, A_{1,3}=10$ and $A_{1,4}=14$. Then the response time $R_{2,1}$ is the first instant $t$ when $\beta_1(t) = 0$: \begin{align*}\beta_1(1) & = \beta_1(0) = x + y\\
	    \beta_1(3) &= \beta_1(0)  + C_{1,1} - 3 = x + y - 2\\
	    \beta_1(7) &= \beta_1(0) + C_{1,1} + C_{1,2} - 7 \\& = x+y-4\\
	    \beta_1(11) &= \beta_1(0) + C_{1,1} + C_{1,2} + C_{1,3} - 11 \\ &= x+y-6\\
	    \beta_1(13) &= \beta_1(0) + C_{1,1} + C_{1,2} + C_{1,3} - 13 \\ &= x+y-8 = 0
	    \end{align*}
	    Hence $R_{2,1} = 13$. This is illustrated in \figurename~\ref{fig:repidle}. To build all possible values of $R_{2,1}$, one must do the same for all combinations of possible values of $(A_{1,1}, A_{1,2}, A_{1,3}, A_{1,4})$ and $(C_{1,1}, C_{1,2}, C_{1,3}, C_{1,4})$. Lemma~\ref{eq:tda2} provides an analytical expression of this result for the heavy-traffic response time $\hat{R}_{2,1}$.
	\end{example}

	\begin{figure}[t]
		\centering
		\begin{tikzpicture}
            \coordinate (O) at (0,-1);
            \draw[->,thick] (-0.3,-1) -- (7,-1) coordinate[label = {below:$t$}] (xmax);
            \draw[->,thick] (0,-1.3) -- (0,3.5) coordinate[label = {right:$\beta_{1}(t)$}] (ymax);
            \path[name path=x] (0,3) -- (2,2);
            \draw[thick]      (1,2) -- (0,3) node[above left] {$x+y$};
            \draw[thick][thick]      (1,2.5) -- (3,0.5) node[pos=0.8, below right] {};
            \draw[dashed]       (1,2.5) -- (1, -1) node[below] {$A_{1,1}$};
            \draw[dashed]       (1,2) -- (1, 2.5) node[below] {};
            \draw[thick]      (3,1.5) -- (5,-0.5) node[pos=0.8, below right] {};
            \draw [dashed]      (3,1.5) -- (3,-1) node[below] {$A_{1,2}$};
        
            \draw [dashed]      (0,0.5) -- (3,0.5) node[above] {};
            \draw[<->,thick] (-0.1,0.5) -- (-0.1,1.5) node[pos=0.5, left] {$C_{1,2}$};
            \draw [dashed]      (0,1.5) -- (3,1.5) node[below] {};
        
            \draw[dashed]      (3,1.5) -- (3,0.5) node[below] {};
            \draw[thick]      (5,0.5) -- (6.5,-1) node[pos=0.8, below right] {};
            \draw [dashed]      (5,0.5) -- (5,-1) node[below] {$A_{1,3}$};
            \draw [dashed]    (5,-0.5) -- (5,0.5) node[below] {};
            \draw [dashed]      (6.5,-1) -- (6.5,0) node[above] {$R_{2,1} = \mathcal{I}_{1}^{x+y}$};
        \end{tikzpicture}
		\caption{Representation of the response time $R_{2,1}$ as an idle time when $A_{2,1}=0, C_{2,1}=y$ and $\beta_{1}(0)=x$, in Example~\ref{ex:repidle}.}
		\label{fig:repidle}
	\end{figure}
	
		From \eqref{eq:TDA} and Lemma~\ref{eq:tda2} we establish a necessary condition of the feasibility of $\Gamma$.
	
	\begin{proposition} \label{prop:feasible}
		A non-stable real-time system with implicit deadlines and independent execution times is not feasible under a fixed-priority scheduling policy.
	\end{proposition}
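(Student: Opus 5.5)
Fix an arbitrary fixed-priority preemptive policy, i.e.\ an ordering of the tasks of $\Gamma$, and assume $\Ubar \geq 1$. The plan is to show that the lowest priority task $\tau_n$ under this ordering violates \eqref{eq:schedulable}. Since the ordering is arbitrary, no fixed-priority policy would then satisfy Definition~\ref{def:feasible}. The relevant level is the lowest one, where $\Ubar_n = \Ubar \geq 1$ does not depend on the ordering. The higher level $n-1$ has $\Ubar_{n-1} = \Ubar - u_n$. So I would reduce everything to the behaviour of level $n-1$ as seen by $\tau_n$, via the time-demand relation \eqref{eq:TDA} and Lemma~\ref{eq:tda2}.

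First, by \eqref{eq:markov} and Lemma~\ref{eq:tda2}, conditionally on $\hat\beta_{n-1}(A_{n,l})=x$ and $C_{n,l}=y$, the response time $\hat R_{n,l}$ is distributed as $\I^{x+y}_{n-1}$. Here $y>0$ because $C_n>0$. There are two cases.

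\emph{Case $\Ubar_{n-1}\geq 1$.} By Theorem~\ref{thm:loynes}, level $n-1$ has only finitely many idle times (or none with positive probability when $\Ubar_{n-1}=1$). Equivalently, in the Brownian picture, $\hat W_{n-1}(s)-s$ has drift $\Ubar_{n-1}-1\geq 0$, so $\Pr(\I^{x+y}_{n-1}=\infty)>0$; for drift $0$ the first-passage time is finite but has infinite mean, and I would handle that subcase as below. Then $\Pr(\sup_l \hat R_{n,l} > t)$ is bounded below by a positive constant uniformly in $t$.

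\emph{Case $\Ubar_{n-1}<1\le \Ubar$.} Here Lemma~\ref{lem:inverseGauss} gives $\E[\I^{x+y}_{n-1}] = (x+y)/(1-\Ubar_{n-1}) \geq y/(1-\Ubar_{n-1})$. Taking expectation over $y\sim F_n$ gives $\E[\hat R_{n,l}] \geq m_n/(1-\Ubar_{n-1})$. The condition $\Ubar \geq 1$ means $u_n = \lambda_n m_n \geq 1-\Ubar_{n-1}$, i.e.\ $m_n/(1-\Ubar_{n-1}) \geq \E[T_n]=\E[D_n]$. So the mean response time already meets or exceeds the mean deadline, even before counting the backlog contribution $x$.

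The backlog term makes this strict and growing. Since $\Ubar_n\ge1$, Theorem~\ref{thm:loynes} says the level-$n$ backlog never reaches a finite steady state. In the non-discarding analysis used for the upper bound, the jobs of $\tau_n$ accumulate: the work of $\tau_n$ plus level $n-1$ arriving over $[0,t]$ is on average $\Ubar t \geq t$. Therefore $\sup_l \hat R_{n,l}$ is almost surely unbounded, or at least $\Pr(\sup_l \hat R_{n,l}>t)$ does not tend to $0$. Plugging this into \eqref{eq:schedulable}, $\int \Pr(\sup_l \hat R_{n,l}>t)\,dG_n(t)$ is bounded away from $0$, and I would argue it tends to $1$, which exceeds any prescribed $\alpha_n\in(0,1)$.

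\textbf{Main obstacle.} The hardest step is passing from these statements about $\hat R$ and the non-discarding backlog back to the true $R_{n,l}$ with discarding. The bound \eqref{eq:bound} goes the wrong way for a lower bound. My first attempt would be to avoid backlog altogether and use only the fresh work of each job. Under discarding, each job of $\tau_n$ competes with level $n-1$ demand over its own window $[A_{n,l},A_{n,l+1})$. Because $\Ubar\geq1$, the expected processor demand in that window, $C_{n,l}+\Ubar_{n-1}T_{n,l+1}$, is at least the window length on average. Then, using independence across $l$ (Remark~\ref{remark:discard}) and infinitely many trials, Borel--Cantelli should give that some job misses with probability $1$, so the left side of \eqref{eq:schedulable} equals $1>\alpha_n$. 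The boundary case $\Ubar=1$ with little variance is the delicate point: there the miss probability per job must still be shown to be positive, which I would obtain from positive variance or, in the degenerate deterministic case, from the tie convention $R\le D$ failing with positive probability.
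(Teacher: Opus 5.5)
\textbf{Verdict: the proposal has genuine gaps.} Your Case~A matches the paper's idea, but Case~B, the fallback for the true $R_{n,l}$, and the treatment of $\Ubar=1$ do not work.

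\textbf{Comparison with the paper and the gaps in the heavy-traffic part.} The paper's proof is essentially your Case~A, applied to a level with $\Ubar_k>1$. By Theorem~\ref{thm:loynes} that level has finitely many idle times, so via \eqref{eq:TDA} the heavy-traffic response times eventually become infinite. Hence $\Pr(\sup_{l}\hat R_{k,l}>t)=1$ for all $t$, which violates \eqref{eq:schedulable} for every $\alpha_k\in(0,1)$. The paper treats only $\Ubar_k>1$ and never leaves $\hat R$. You are right that \eqref{eq:bound} points the wrong way for a lower bound; the paper does not address this. Your additions do not close these gaps.

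First, a positive lower bound uniform in $t$ is not enough. Since $\alpha_n$ is prescribed and may be arbitrarily close to $1$, you need probability one. For $\Ubar_{n-1}>1$ this is available: after the a.s.\ last idle time of level $n-1$, every later job has $\hat R_{n,l}=\infty$. For $\Ubar_{n-1}=1$ the driftless hitting time is a.s.\ finite, so that subcase yields nothing.

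Second, Case~B breaks down. The inequality $\E[\hat R_{n,l}]\ge\E[D_n]$ gives no lower bound on $\Pr(\hat R_{n,l}>D_n)$ beyond positivity. Positivity holds for every system, since the inverse Gaussian has full support on $(0,\infty)$. The divergence of the level-$n$ backlog never enters either. By \eqref{eq:TDA} and Lemma~\ref{eq:tda2}, $\hat R_{n,l}\sim\I^{\hat\beta_{n-1}(A_{n,l})+C_{n,l}}_{n-1}$ depends only on the stable level $n-1$ and the fresh $C_{n,l}$, not on $\lambda_n$. So in this model the steady-state law of $\hat R_{n,l}$ in Case~B is the same as for the stable system obtained by lengthening $T_n$; only $D_n$ changes. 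Any argument that $\sup_l\hat R_{n,l}$ is unbounded there would apply verbatim to stable systems, and so does not use $\Ubar\ge 1$.

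\textbf{The fallback for the true response times.} This also fails as stated. ``Fresh demand in $[A_{n,l},A_{n,l+1})$ exceeds $T_{n,l+1}$'' does not imply a miss. A miss requires the inequality in \eqref{eq:TDA0} for every $t\le T_{n,l+1}$, and higher-priority work released after $\tau_{n,l}$ completes does not delay it. Moreover, consecutive windows share the carried-over level-$(n-1)$ backlog, so they are not independent trials. Remark~\ref{remark:discard} only removes $\tau_n$'s own carry-over.

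An argument that really uses $\Ubar>1$ is cumulative rather than per window. If from some time on no job misses, then all work released afterwards is executed by its deadline. That is impossible, since $W_n(t)/t\to\Ubar>1$ a.s.\ while the processor supplies at most $t$. You would still have to attribute infinitely many misses to a single task with probability one.

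\textbf{The boundary $\Ubar=1$.} This cannot be rescued by the tie convention, which goes the other way: $R_{k,l}\le D_k$ counts as met. The deterministic system $\tau_1=(C=1,T=2)$, $\tau_2=(C=2,T=4)$ has $\Ubar=1$. Under rate-monotonic priorities it meets every deadline, with $R_{2,l}=4=D_2$. So the statement is provable only for $\Ubar>1$, or under an extra nondegeneracy assumption. $\Ubar>1$ is all the paper's argument covers, consistent with Theorem~\ref{thm:loynes} leaving $\Ubar_k=1$ open.
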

	
	\begin{proof}
		Let $\tau_k \in \Gamma$ be such that $\Ubar_{k} > 1$. The Loynes theorem \ref{thm:loynes} states that there is a finite number of idle times of level $k$, which implies with \eqref{eq:TDA} that heavy-traffic response times get infinite at some point, \ie for all $t> 0$, $\Pr\left(\sup_{l \in \mathbb{N}} \hat{R}_{k,l} > t \right) = 1$. In other words, there is  no permitted failure rate $\alpha_k \in (0,1)$ such that $\tau_k$ is schedulable as defined in \eqref{eq:schedulable}. As we consider a fixed-priority scheduling policy, then the lowest level backlog is larger than all $k$-level backlogs $\beta_k(t)$ at any time $t>0$.
	\end{proof}

	With the same reasoning we get the explicit formula of the distribution function of response times.
	
	Let us denote the conditional probability that the distribution of the initial backlog $\beta_{k-1}(0)$ is the probability $d\mu$ and the first job released is $\tau_{k,1}$, $$\Pr_k^\mu(\cdot) \triangleq \Pr(\cdot \mid \beta_{k-1}(0) \sim d\mu, A_{k,1}=0) =\int  \Pr^x_k(\cdot) d\mu(x)$$

	\begin{lemma} \label{coro:inverse}
		Under $\Pr^\mu_k$, the probability function of the heavy-traffic response time $\hat{R}_{k,1}$ is $$h_k^\mu(t) = \int \psi_{k-1}^{x}(t)d(\mu \ast F_k)(x)$$ 
	\end{lemma}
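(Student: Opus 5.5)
The plan is to combine Lemma~\ref{eq:tda2} with the law of total probability \eqref{def:conditioneq}, conditioning first on the initial backlog and then on the execution time $C_{k,1}$.

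Step one: by definition of $\Pr^\mu_k$,
\begin{equation*}
\Pr^\mu_k\left(\hat{R}_{k,1} > t\right) = \int \Pr^x_k\left(\hat{R}_{k,1} > t\right) d\mu(x).
\end{equation*}

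Step two: inside the integral, condition on $C_{k,1}=y$ using \eqref{def:conditioneq} with $C_{k,1}\sim dF_k$. By Lemma~\ref{eq:tda2} this gives
\begin{equation*}
\Pr^x_k\left(\hat{R}_{k,1} > t\right) = \int \Pr\left(\I^{x+y}_{k-1} > t\right) dF_k(y).
\end{equation*}
Conditioning on $C_{k,1}$ under $\Pr^x_k$ is legitimate because $C_{k,1}$ is independent of $\beta_{k-1}(0)$ and of the level $k-1$ demand increments. The backlog depends only on higher-priority execution times and arrivals, and execution times are independent by Remark~\ref{sec:indep}. So the conditional law of $C_{k,1}$ is still $F_k$.

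Step three: combine the two steps into a double integral $\iint \Pr(\I^{x+y}_{k-1}>t)\,dF_k(y)\,d\mu(x)$. Since the integrand depends on $(x,y)$ only through $z=x+y$, and $x\sim d\mu$ and $y\sim dF_k$ are independent, the image measure of $d\mu\otimes dF_k$ under $(x,y)\mapsto x+y$ is the convolution $d(\mu\ast F_k)$. This yields
\begin{equation*}
\Pr^\mu_k\left(\hat{R}_{k,1} > t\right) = \int \Pr\left(\I^{z}_{k-1} > t\right) d(\mu\ast F_k)(z).
\end{equation*}

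Step four: differentiate in $t$. Lemma~\ref{lem:inverseGauss} gives $-\frac{d}{dt}\Pr(\I^z_{k-1}>t)=\psi^z_{k-1}(t)$, so $h^\mu_k(t)=\int \psi^z_{k-1}(t)\,d(\mu\ast F_k)(z)$. Exchanging the derivative and the integral is harmless here, since $F_k$ and $\mu$ are discrete (finite or countable sums with summable weights).

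The main obstacle I expect is step two, namely justifying the independence that allows conditioning on $C_{k,1}$ inside $\Pr^x_k$. Remark~\ref{sec:indep} allows $C_{k,1}$ to be correlated with $T_{k,1}$, and possibly with arrivals of other tasks. I would first argue that only level $k-1$ quantities enter \eqref{eq:TDA} besides $C_{k,1}$ itself, and that execution times across tasks are independent. A secondary technicality arises if $\mu$ has mass at $z=0$: then $\psi^0_{k-1}$ degenerates to a point mass at $0$, which can be handled by restricting to $z>0$.
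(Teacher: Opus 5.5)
Your proposal is correct and takes the same route as the paper, whose proof simply invokes Lemmas~\ref{lem:inverseGauss} and \ref{eq:tda2} together with the law of total probability \eqref{def:conditioneq}. You spell out what the paper leaves implicit: conditioning on the initial backlog and then on $C_{k,1}$, identifying the law of their sum as $\mu \ast F_k$, differentiating in $t$, and the independence of $C_{k,1}$ from the level-$(k-1)$ quantities, which the paper also tacitly assumes.
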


	\begin{proof}
		We conclude with  Lemmas \ref{lem:inverseGauss} and \ref{eq:tda2} and the definition of conditional probabilities, \cf \eqref{def:conditioneq}.
	\end{proof}

	In Sections~\ref{subsec:transient} and \ref{subsec:steady}, we prove that the proper initialization of the system puts the system in two specific cases: the worst-case and the steady-state. 
	
	\subsection{Worst-case response time} \label{subsec:transient}
	
	Before the system reaches its steady-state, we say it is \textit{transient}. In that case we cannot provide the exact distribution of the backlog in an analytical formulation. However, we can bound it by using the {\it worst-case backlog}.
	
	Let $\mu_k = F_1 \ast \dots \ast F_k$ be the distribution function of the demand of level $k$ of a \textit{synchronous activation} $\sum_{i=1}^{k} C_i$.
	\begin{proposition} \label{prop:LL}
		If $\Ubar_{k-1} < 1$, the worst-case heavy-traffic response time distribution of $\tau_k$ is defined by \begin{equation} \Pr\left(\sup_{l \in \mathbb{N}} \hat{R}_{k,l} > t \right) =  \Pr_{k}^{\mu_{k-1}}\left(\hat{R}_{k,1} > t\right) \label{eq:liu} \end{equation} 
	\end{proposition}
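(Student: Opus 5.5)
This is a probabilistic analogue of the Liu and Layland critical-instant argument. The plan is to show two things. First, the tail of every heavy-traffic response time $\hat{R}_{k,l}$ is monotone in the level-$(k-1)$ backlog at its release. Second, the largest backlog a job of $\tau_k$ can face is, stochastically, the one created by a synchronous activation, whose distribution is $\mu_{k-1}=F_1\ast\dots\ast F_{k-1}$.

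\textbf{Step 1: reduce every job to the first one.} By \eqref{eq:markov}, $\Pr(\hat{R}_{k,l}>t\mid \hat{\beta}_{k-1}(A_{k,l})=x)=\Pr^x_k(\hat{R}_{k,1}>t)$. Lemma~\ref{eq:tda2} then identifies the right-hand side, given $C_{k,1}=y$, with $\Pr(\I^{x+y}_{k-1}>t)$. In the form \eqref{eq:idlek} this equals $\Pr(\inf_{s\le t}\hat{W}_{k-1}(s)-s>-x-y)$, which is nondecreasing in $x$ for every fixed path of $\hat{W}_{k-1}$. Hence the map $x\mapsto \Pr^x_k(\hat{R}_{k,1}>t)$ is nondecreasing. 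Consequently, if $\hat{\beta}_{k-1}(A_{k,l})$ is stochastically dominated by some initial law $\mu$, then $\Pr(\hat{R}_{k,l}>t)\le \Pr^\mu_k(\hat{R}_{k,1}>t)$, and Lemma~\ref{coro:inverse} writes the right-hand side explicitly.

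\textbf{Step 2: the synchronous activation is the worst initial backlog.} I would argue as in the classical critical-instant lemma. Take a release $A_{k,l}$ and the start $s$ of the level-$(k-1)$ busy period containing it. Shifting all higher-priority releases in $[s,A_{k,l}]$ to coincide with $A_{k,l}$ cannot decrease the backlog at $A_{k,l}$. The backlog then becomes at most one pending job per task, i.e.\ $\sum_{i<k}C_i\sim\mu_{k-1}$. Here I must be careful: by Remark~\ref{remark:discard} the backlog $\beta$ ignores discarding, so I would phrase the step for the blocking time \eqref{eq:bound}, $B_{k-1}\le\sum_{i<k}C_i(t)$. The hypothesis $\Ubar_{k-1}<1$, through Theorem~\ref{thm:loynes}, guarantees that idle times of level $k-1$ recur, so such a busy-period start $s$ always exists and the level $k-1$ is stable. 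This gives the upper bound $\Pr(\sup_l\hat{R}_{k,l}>t)\le\Pr^{\mu_{k-1}}_k(\hat{R}_{k,1}>t)$. For the supremum, I would use that, by independence of execution times across idle periods (Proposition~\ref{prop:idleissteady}), every busy period is a renewal of the same dynamics, so the bound holds uniformly in $l$.

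\textbf{Step 3: attainment.} For the reverse inequality, the synchronous activation is itself a realizable scenario: all tasks $\tau_1,\dots,\tau_k$ are released at $0$ with $\beta_{k-1}(0)\sim\mu_{k-1}$ and $A_{k,1}=0$. Under it, $\hat{R}_{k,1}$ has exactly the law $\Pr^{\mu_{k-1}}_k$. Since $\sup_l\hat{R}_{k,l}\ge\hat{R}_{k,1}$, the two sides of \eqref{eq:liu} coincide.

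The main obstacle is Step 2. Making the critical-instant shifting argument rigorous for the continuous heavy-traffic process, while also taking the supremum over infinitely many jobs, is delicate. With random inter-arrivals, shifting releases changes the law of the process, so I would try a pathwise coupling on the blocking time rather than on $\hat{\beta}$. I would then invoke the monotonicity from Step 1 to pass from the coupling to the stochastic order.
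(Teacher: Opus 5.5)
Your route is the paper's. You bound the level-$(k-1)$ interference at a release of $\tau_k$ by the synchronous-activation demand $\sum_{i<k}C_i\sim\mu_{k-1}$, using discarding under implicit deadlines and \eqref{eq:bound}, and then transfer through \eqref{eq:markov} and Lemma~\ref{eq:tda2}. Your Steps~1 and~3 make explicit the monotonicity and attainment that the paper leaves tacit.

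\textbf{The supremum step fails.} A bound holding uniformly in $l$ controls $\sup_l\Pr(\hat{R}_{k,l}>t)$, not $\Pr(\sup_l\hat{R}_{k,l}>t)$, and the renewal structure you invoke points the other way. For jobs whose windows $[A_{k,l},A_{k,l}+t]$ are disjoint, $\{\hat{R}_{k,l}>t\}$ contains the event that $C_{k,l}+\hat{W}_{k-1}(A_{k,l}+s)-\hat{W}_{k-1}(A_{k,l})-s$ stays positive on $[0,t]$. These events are independent, each with probability $\int\Pr(\I^{y}_{k-1}>t)\,dF_k(y)>0$, since inverse Gaussian laws have unbounded support. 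The second Borel--Cantelli lemma then gives $\Pr(\sup_l\hat{R}_{k,l}>t)=1$ for every $t>0$, whereas the right-hand side of \eqref{eq:liu} tends to $0$ as $t\to\infty$.

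The case $k=1$ shows this without Brownian motion. There $\hat{R}_{1,l}=C_{1,l}$ are i.i.d.\ and $\mu_0$ is the point mass at $0$. So the left-hand side is $\mathbf{1}_{[0,c_1^{max})}(t)$ while the right-hand side is $1-F_1(t)$; for $\tau_1$ of Table~\ref{tab:taskset} at $t=1.5$ these are $1$ versus $0.5$. Hence \eqref{eq:liu} cannot be proved for the law of the supremum.

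What your argument actually supports is the per-job worst case $\sup_l\Pr(\hat{R}_{k,l}>t)=\Pr^{\mu_{k-1}}_k(\hat{R}_{k,1}>t)$, with equality realized by the synchronous start of your Step~3. The same holds for the paper's proof, which rests on the per-job identity \eqref{eq:markov}. This per-job quantity is the usual meaning of a worst-case deadline failure probability, and it is the version you should state and prove.

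\textbf{Drop the critical-instant shifting in Step~2.} As phrased it is wrong. Moving every higher-priority release of $[s,A_{k,l}]$ to $A_{k,l}$ stacks up all jobs of that busy period, possibly several per task, so it does not yield ``at most one pending job per task''. That property comes from discarding under implicit deadlines. Then \eqref{eq:bound} gives $B_{k-1}(A_{k,l})\le\sum_{i<k}C_i(A_{k,l})$ pathwise, which is exactly the paper's step; in the heavy-traffic model there are no discrete releases to shift anyway.

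The coupling you were looking for is then immediate: take the same Brownian path and the same $C_{k,l}$ with a larger starting level, and apply your Step~1 monotonicity. Busy-period starts and Theorem~\ref{thm:loynes} play no role in this comparison.

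Do make this blocking-time version the definition of $\hat{R}_{k,l}$ in \eqref{eq:TDA}, though. The reflected backlog $\hat\beta_{k-1}(A_{k,l})$ is not dominated by $\mu_{k-1}$: its stationary law $\pi_{k-1}$ has support $[0,\infty)$, whereas $\mu_{k-1}$ lives on $[0,\sum_{i<k}c_i^{max}]$. So Step~1 cannot be applied to $\hat\beta_{k-1}$ itself.
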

	
	\begin{proof} 
		First, let us consider $\Ubar_{k-1} < 1$, as stated in Theorem~\ref{thm:loynes} the backlog process converges to $\tilde{\beta}_k$ which is finite. Jobs are discarded if they miss their deadlines, and as we consider implicit deadlines, there can be at most one job per task activated simultaneously. This leads into considering $\beta_{k-1}(0) = \sum_{i=1}^{k-1} C_i$ as the maximum backlog of level $k-1$, \cf \eqref{eq:bound}, and use the property stated in \eqref{eq:markov}. 
		
		When $\Ubar_{k-1} = 1$, idle times of level $k$ may or may not be finite, thus we cannot conclude anything on the distribution of response times.
		
		When $\Ubar_{k-1} > 1$, the largest response time does not come from the synchronous activation and is in fact $\infty$. As we have already demonstrated in the proof of Proposition~\ref{prop:feasible}, response times of the task $\tau_k$ increase to $\infty$, due to the absence of idle times of level $k-1$. Then we conclude that the worst-case heavy-traffic response time is $\infty$.
		
		See \figurename~\ref{fig:backlog}. 
	\end{proof}
	
	The worst-case response time being identified as the synchronous activation case, we provide an explicit expression its distribution function. 
	
	\begin{proposition} \label{prop:worstcase}
		Let $\tau_k \in \Gamma$ and $\Ubar_k < 1$. The probability function of the worst-case heavy-traffic response time of $\tau_k$ is \begin{equation} h_k^{max}(t) \triangleq \int_0^\infty \psi_{k-1}^x(t) d\mu_{k}(x)  \label{eq:steadystate2} \end{equation}
	\end{proposition}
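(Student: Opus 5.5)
The plan is to combine Proposition~\ref{prop:LL} with Lemma~\ref{coro:inverse}, taking the initial distribution $\mu = \mu_{k-1}$, and then identify the resulting convolution with $\mu_k$.

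\textbf{Step 1: reduce the supremum to the first job.} Since $\Ubar_k < 1$ implies $\Ubar_{k-1} \le \Ubar_k < 1$, Proposition~\ref{prop:LL} applies. It gives
$$\Pr\Bigl(\sup_{l \in \mathbb{N}} \hat{R}_{k,l} > t\Bigr) = \Pr_k^{\mu_{k-1}}\bigl(\hat{R}_{k,1} > t\bigr).$$
The law of the worst-case heavy-traffic response time is therefore the law of $\hat{R}_{k,1}$ under the synchronous initial backlog $\beta_{k-1}(0) \sim d\mu_{k-1}$. Its probability function is obtained by differentiating in $t$.

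\textbf{Step 2: apply Lemma~\ref{coro:inverse}.} With $\mu = \mu_{k-1}$, the lemma gives the probability function
$$h_k^{\mu_{k-1}}(t) = \int \psi_{k-1}^x(t)\, d(\mu_{k-1} \ast F_k)(x).$$
Unpacked, this step conditions on $\beta_{k-1}(0) = x'$ and $C_{k,1} = y$ and uses Lemma~\ref{eq:tda2} to identify $\hat{R}_{k,1}$ with $\I_{k-1}^{x'+y}$. Lemma~\ref{lem:inverseGauss} then gives the inverse Gaussian density $\psi_{k-1}^{x'+y}$. The final integration against the law of $x'+y$ is justified by \eqref{def:conditioneq}.

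This integration requires $C_{k,1}$ to be independent of the initial level-$(k-1)$ backlog. That holds by Remark~\ref{sec:indep}, since the execution times of distinct jobs are independent. Hence the law of $x'+y$ is the convolution $\mu_{k-1} \ast F_k$.

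\textbf{Step 3: identify the convolution.} By definition $\mu_k = F_1 \ast \dots \ast F_k = \mu_{k-1} \ast F_k$. Since all variables are non-negative, the support is $[0,\infty)$, which yields \eqref{eq:steadystate2}.

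\textbf{Main obstacle.} The delicate point is interchanging the derivative in $t$ with the integral over $x$, that is, passing from distribution functions to probability functions. For $x > 0$, $\psi_{k-1}^x$ is a smooth density. Because the $C_i$ are positive, $\mu_k$ puts no mass at $0$. Since $F_k$ is discrete, the integral is in fact a finite or countable sum of densities, and dominated or monotone convergence justifies the exchange. A secondary point is the case $k=1$, where level $0$ is empty. There I would interpret $\psi_0^x$ as the degenerate law at $t = x$, which is consistent with $\hat{R}_{1,1} = C_{1,1}$.
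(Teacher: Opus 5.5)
Your proposal is correct and follows the paper's proof: invoke Proposition~\ref{prop:LL} to reduce to $\hat{R}_{k,1}$ under $\Pr_k^{\mu_{k-1}}$, apply Lemma~\ref{coro:inverse} with $\mu=\mu_{k-1}$, and use $\mu_{k-1}\ast F_k=\mu_k$. Your additional checks are sound and are left implicit in the paper: that $\Ubar_{k-1}\le\Ubar_k<1$, that $C_{k,1}$ is independent of the synchronous backlog, that the derivative and integral may be exchanged, and that the case $k=1$ is degenerate.
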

	
	\begin{proof}
		From Proposition~\ref{prop:LL} we know that when $\Ubar_k < 1$, the worst-case response time of $\tau_k \in \Gamma$ is $\hat{R}_{k,1}$ under $\Pr_k^{\mu_{k-1}}$. Furthermore, from Lemma~\ref{coro:inverse} we have $h_k^{max}(t)  = \int \psi_{k-1}^{z} \left(t \right) d(\mu_{k-1} \ast  F_k) (z) $. In addition, we have the relation $\mu_{k-1} \ast F_k = \mu_k$.
	\end{proof}
	
	\subsection{Steady-state response time} \label{subsec:steady}
	
	The backlog process of level $k$ being stationary and with a stationary distribution $\pi_k$, in the steady-state \eqref{eq:markov} becomes \begin{equation}\Pr_{k}^{\pi_{k-1}} \left(\hat{R}_{k,l} > t \right) = \Pr_{k}^{\pi_{k-1}} \left(\hat{R}_{k,1} > t \right) \label{eq:markovrt}\end{equation} for any $l \in \mathbb{N}$. Then if it holds for any $l \in \mathbb{N}$, it holds for all the response times after the convergence of the backlog process. This is why in the steady-state, the distribution of heavy-traffic response times is unique. Consider the steady-state response time \begin{align} \tilde{R}_k \triangleq \inf \left\{ t > 0 : \tilde{\beta}_{k-1} + C_k +  \hat{W}_{k-1}(t) = t \right\} \label{eq:TDAsteady} \end{align}
	
	As in Section~\ref{subsec:transient}, we get the distribution function of steady-state response times.

	\begin{proposition} \label{prop:backlog}
		Let $\tau_k \in \Gamma$, $\Ubar_k < 1$, $\tilde{h}_k$ be the probability function of the steady-state response time of $\tau_k$, and for $i \in \{1, \dots, k\}$ let $\eta_i$ be as defined in Proposition~\ref{prop:steadydistribution} and $\xi_{i,k} = \prod_{j=1, j\neq i}^{k}  ( 1 - \eta_i/\eta_j)^{-1}$. Then for all $t>0$,\begin{equation*} \tilde{h}_k(t) =  \sum_{i=1}^{k-1} \xi_{i,k-1} \int_0^\infty \psi^z_{k-1}(t)  d\nu_{i,k}(z) \label{eq:steadystate} \end{equation*} where $d\nu_{i,k}(z)=\eta_i \int_0^z  e^{- \eta_i (z-x)} dF_k(x) dz$ is the distribution  of $C_k + \tilde{\beta}_i - \tilde{\beta}_{i-1}$.
	\end{proposition}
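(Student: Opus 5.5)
The plan mirrors the proof of Proposition~\ref{prop:worstcase}, with the synchronous distribution $\mu_{k-1}$ replaced by the stationary backlog distribution $\pi_{k-1}$ of Proposition~\ref{prop:steadydistribution}. The steps are: identify $\tilde{R}_k$ with $\hat{R}_{k,1}$ under $\Pr_k^{\pi_{k-1}}$; apply Lemma~\ref{coro:inverse} with $\mu=\pi_{k-1}$; then expand the convolution $\pi_{k-1}\ast F_k$ using the partial-fraction form \eqref{eq:loynesdistribution}.

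\textbf{Identification.} The definition \eqref{eq:TDAsteady} of $\tilde{R}_k$ is exactly \eqref{eq:TDA} with $\hat{\beta}_{k-1}(A_{k,l})$ replaced by $\tilde{\beta}_{k-1}\sim d\pi_{k-1}$. By \eqref{eq:markov} and \eqref{eq:markovrt}, $\tilde{R}_k$ is therefore distributed as $\hat{R}_{k,1}$ under $\Pr_k^{\pi_{k-1}}$. This uses the Markov property of the Brownian increments of $\hat{W}_{k-1}$, which are independent of the backlog at the arrival time. Since $\Ubar_{k-1}\le \Ubar_k<1$, Theorem~\ref{thm:loynes} guarantees that $\tilde{\beta}_{k-1}$ exists and is finite, so this conditioning is legitimate.

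\textbf{Applying Lemma~\ref{coro:inverse}.} With $\mu=\pi_{k-1}$, the probability function of $\tilde{R}_k$ is
\[
\tilde{h}_k(t)=\int_0^\infty \psi_{k-1}^{z}(t)\, d(\pi_{k-1}\ast F_k)(z).
\]
The density of $\pi_{k-1}$ can be written as a signed mixture of exponential densities by differentiating \eqref{eq:loynesdistribution}:
\[
d\pi_{k-1}(x)=\sum_{i=1}^{k-1}\xi_{i,k-1}\,\eta_i e^{-\eta_i x}\,dx .
\]
One checks that this agrees with the product formula in the proof of Proposition~\ref{prop:steadydistribution}, using $\prod_{j\ne i}\eta_j/(\eta_j-\eta_i)=\prod_{j\neq i}(1-\eta_i/\eta_j)^{-1}$.

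Convolution is linear in its first argument, so
\[
\pi_{k-1}\ast F_k=\sum_{i=1}^{k-1}\xi_{i,k-1}\,(\mathcal{E}_{\eta_i}\ast F_k),
\]
where $\mathcal{E}_{\eta_i}$ denotes the exponential law of parameter $\eta_i$. The factor $\mathcal{E}_{\eta_i}\ast F_k$ is the law of $C_k+E_i$ with $E_i\sim\mathcal{E}_{\eta_i}$ independent of $C_k$. By the proof of Proposition~\ref{prop:steadydistribution}, $E_i$ is distributed as $\tilde\beta_i-\tilde\beta_{i-1}$. Its density at $z$ is
\[
\int_0^z \eta_i e^{-\eta_i(z-x)}\,dF_k(x),
\]
which is exactly $d\nu_{i,k}$. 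Substituting this and exchanging the finite sum with the integral (Fubini, since each term is integrable) gives the stated formula.

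\textbf{Main obstacle.} The delicate point is the first step: justifying that the initial backlog seen by a job of $\tau_k$ in steady state is $\pi_{k-1}$-distributed and independent of both $C_k$ and the future increments of $\hat{W}_{k-1}$. Independence from $C_k$ follows from Remark~\ref{sec:indep}, because $C_{k,l}$ does not enter the level $k-1$ backlog. Independence from the future increments of $\hat{W}_{k-1}$ follows from the strong Markov property of the reflected Brownian motion $\hat{\beta}_{k-1}$ at the arrival time $A_{k,l}$. One could worry about arrival-time sampling bias, since arrivals do not see time averages in general. I would sidestep this, as the paper does, by working under $\Pr_k^{\pi_{k-1}}$ directly via \eqref{eq:markovrt}. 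A secondary point is that the $\xi_{i,k-1}$ have mixed signs, so $\pi_{k-1}$ is a signed mixture rather than a genuine one. Linearity of the convolution and of the integral still applies, however, and requires the $\eta_i$ to be distinct, which is ensured by the assumption that the $\lambda_k$ are distinct.
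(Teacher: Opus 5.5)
Your proposal is correct and follows the paper's proof essentially step for step. You identify $\tilde R_k$ with the level-$(k-1)$ first idle time started from $\tilde\beta_{k-1}+C_k$, apply Lemma~\ref{coro:inverse} with $\mu=\pi_{k-1}$, and expand $\pi_{k-1}\ast F_k$ through the partial-fraction density $\sum_{i}\xi_{i,k-1}\eta_i e^{-\eta_i x}$ to obtain $\sum_i \xi_{i,k-1}\,d\nu_{i,k}$.

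One side remark needs correcting. Distinct arrival rates $\lambda_i$ do not by themselves force the $\eta_i$ to be distinct, since each $\eta_i$ also depends on $u_i$ and on the coefficients of variation. Distinctness of the $\eta_i$ is therefore an implicit hypothesis, needed (as in the paper) for the $\xi_{i,k-1}$ to be defined.
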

	
	\begin{proof}
	    We know from Proposition~\ref{prop:steadydistribution} that the steady-state level $(k-1)$ backlog distribution is $\pi_{k-1}$. We know from Lemma~\ref{eq:tda2} that the stead-state response time $\tilde{R}_k$ is the first idle time with an initial backlog of $\tilde{\beta}_{k-1} + C_k$. Finally from Lemma~\ref{coro:inverse}, we know that the probability function of $\tilde{R}_k$ is $\tilde{h}_k \triangleq h^{\pi_{k-1}}_k$. Then, let us compute the convolution $\pi_{k-1} \ast F_k$ which is the distribution function of the sum $\tilde{\beta}_{k-1} + C_k$.  For any $z > 0$, we have $d(\pi_{k-1} \ast  F_k) (z) = \frac{d\pi_{k-1}}{dz} \ast  F_k (z)dz$ and
	    \begin{align*}
	         \frac{d\pi_{k-1}}{dz} \ast  F_k&(z)dz = dz\int_0^\infty \frac{d\pi_{k-1}}{dz}(z - x) dF_k(x) \\
	       =& dz \int_0^z  \eta_1 \dots \eta_{k-1} \sum_{i=1}^{k-1} \frac{\exp\left(- \eta_i (z-x)\right)}{\prod_{j=1, j \neq i}^{k-1} (\eta_j - \eta_i)} dF_k(x)\\
	                              =& dz\int_0^z  \sum_{i=1}^{k-1} \eta_i e^{- \eta_i (z-x)} \xi_{i,k-1} dF_k(x)\\
	                              =& \sum_{i=1}^{k-1} \xi_{i,k-1} d\nu_{i,k}(z)
	                              %& = \sum_{i=1}^{k-1} \eta^{\times i }_{k-1} e^{- \eta_i z} \E\left[ e^{\eta_i C_k} \mathbf{1}_{[0,z]}(C_k) \right]
	    \end{align*}
		Finally by definition of conditional probabilities, \cf \eqref{def:conditioneq}, we have $\tilde{h}_k(t) = \int \psi_{k-1}^{z} \left(t \right) d(\pi_{k-1} \ast  F_k) (z) $. %For \eqref{eq:meanRT}, the mean value of the seady-state response time,  we only need to remark that $\E[\tilde{\beta}_{k-1}] = 1 / \eta_{k-1} = \frac{v_{k-1}^2/2}{1-\Ubar_{k-1}}$. Hence, the average steady-state response time is the first idle time of level $k-1$ with an initial work load of $x = m_k + 1 / \eta_k$. Eq. \eqref{eq:bachelier} gives us the  distribution of this idle time of mean  $\frac{x}{1 - \Ubar_{k-1}}$, \cf \eqref{eq:idleavg}. Eq. \eqref{eq:meanRT} is a variant of the Phipp's formula adapted to real-time systems, see (3.5.17), page 242, in \cite{baccelli2013elements}.
	\end{proof}

	\begin{remark}
		In the periodic case,  Diaz \cite{diaz2002stochastic} has proven that the system is always steady when $\Umax_k \leq 1$, \eg \  \figurename~\ref{fig:stationary1}, \ie  the steady-state backlog is $\tilde{\beta}_k = 0$, and that is it reached at $t=0$. \begin{equation} \tilde{h}_k(t) = \int \psi^x_{k-1}(t)dF_k(x) \label{eq:steadystate0} \end{equation}
			This steadiness is illustrated in \figurename~\ref{fig:demand}.
	\end{remark}

	\section{Simulations} \label{sec:exp}

	\begin{figure*}[t]
		\centering
		\subfloat[Histogram of the generated response times of $\tau_3$ in Table~\ref{tab:taskset}, as defined in Section~\ref{sec:sim}.]{\label{fig:sim_response_time}
			\includegraphics[width=0.48\linewidth]{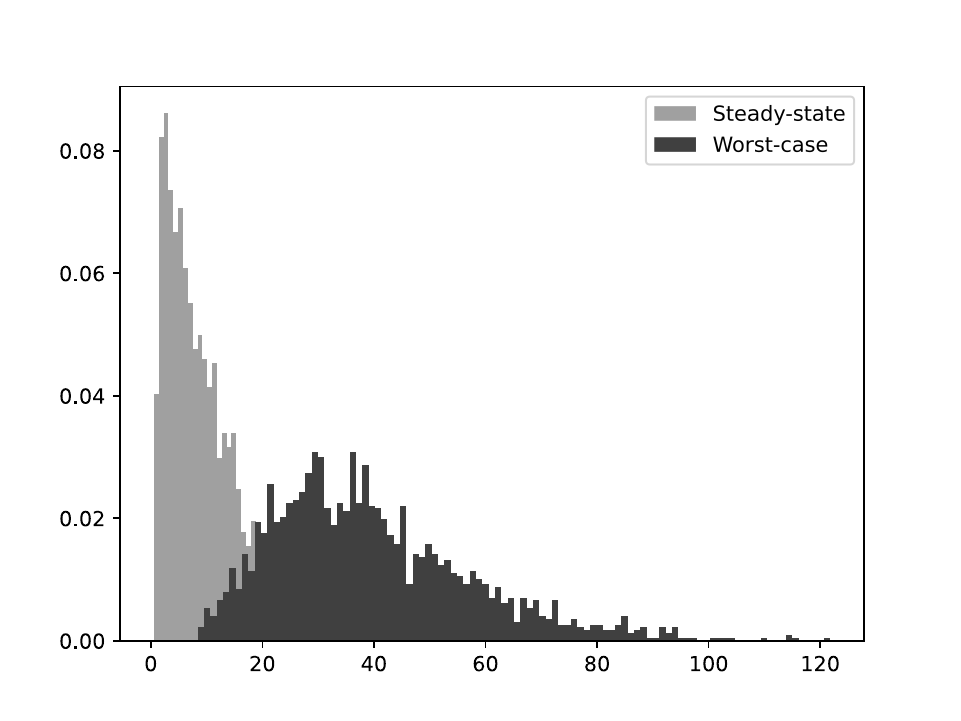}
		}
		\subfloat[Response time distribution functions of $\tau_3$ in Table~\ref{tab:taskset}.\label{fig:rt_cdf}]{
			\includegraphics[width=0.48\linewidth]{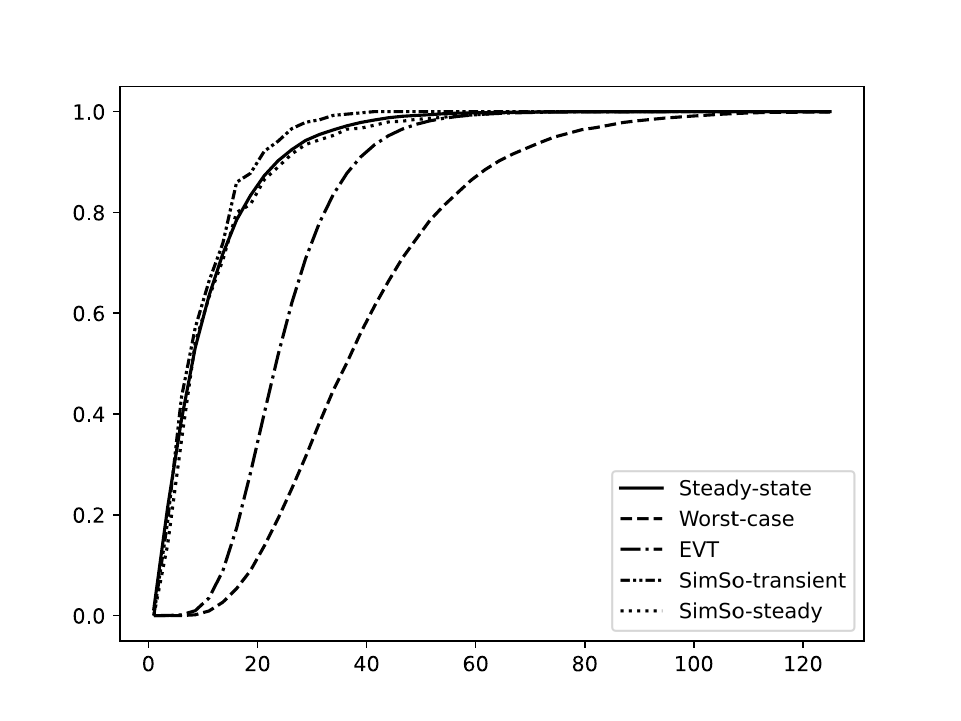}
		}
		\caption{Simulations of a $10\,000$ instances for the SimSo simulations, steady and transient, EVT estimation of the worst-case response time of the SimSo simulations and simulations of heavy-traffic worst-case response time and steady-state response-time when $\Ubar = 0.838$.}
		\label{fig:simulations}
	\end{figure*}

		In Lemma~\ref{eq:tda2} we prove that response times $R_{k,l}$ can be simulated from a sample of idle times of level $k-1$, a sample of execution times and the proper initialization backlog sample, the steady-case or the worst-case.
	\subsection{Generating heavy-traffic response times} \label{sec:sim}
	
 The procedure is as follows:
	
	\begin{itemize}
		\item \textbf{Generate} a backlog $b$ with the distribution function $\mu_{k-1}$ (resp. $\pi_{k-1}$),
		\item \textbf{Generate} an execution time $c$ with the  distribution function $F_k$,
		\item \textbf{Generate} the response time $R_{k} = \I^{b+c}_{k-1}$, with the probability function $\psi_{k-1}^{b+c}$.
	\end{itemize}
	
	See \figurename~\ref{fig:repidle} and \figurename~\ref{fig:sim_response_time}.

	\begin{proposition} \label{prop:RTdistribution}
		The probability function of $R_k$ is $h^{max}_{k}$ (resp. $\tilde{h}_k$).
	\end{proposition}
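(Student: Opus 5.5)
The plan is to compute the density of the generated variable $R_k = \I^{b+c}_{k-1}$ by conditioning on the two random inputs. The three generation steps are sampled independently: $b \sim \mu_{k-1}$ (resp. $\pi_{k-1}$), then $c \sim F_k$, then an idle time with the inverse Gaussian density $\psi_{k-1}^{b+c}$. First I would note that, given $b$ and $c$, the third step produces a variable with density $\psi_{k-1}^{b+c}$. This is exactly the law of $\I^{b+c}_{k-1}$ by Lemma~\ref{lem:inverseGauss}. Because $b$ and $c$ are independent, the pair $(b,c)$ has product law $d\mu_{k-1}\otimes dF_k$, so the sum $z=b+c$ has distribution $\mu_{k-1}\ast F_k$ (resp. $\pi_{k-1}\ast F_k$) by the definition of convolution.

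Next I would apply the conditioning formula \eqref{def:conditioneq} in density form, first with respect to $(b,c)$ and then pushed forward to $z=b+c$. Since the conditional density of $R_k$ depends on $(b,c)$ only through the sum, the density of $R_k$ is
$$t\mapsto \int \psi_{k-1}^{z}(t)\, d(\mu_{k-1}\ast F_k)(z)$$
in the worst case, and
$$t\mapsto \int \psi_{k-1}^{z}(t)\, d(\pi_{k-1}\ast F_k)(z)$$
in the steady case. The one point needing care is the measurability of $z\mapsto\psi^z_{k-1}(t)$, so that this change of variables is legitimate. It holds because $\psi$ is continuous in its parameters for $z>0$. The atom at $z=0$ has measure zero, since $C_k>0$.

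Finally I would identify these integrals with the targets. In the worst case, $\mu_{k-1}\ast F_k=\mu_k$, so the first integral is exactly $h_k^{max}$ as in \eqref{eq:steadystate2}; this is the same computation as in Proposition~\ref{prop:worstcase}. In the steady case, the second integral is $h_k^{\pi_{k-1}}$ from Lemma~\ref{coro:inverse}, which by the proof of Proposition~\ref{prop:backlog} equals $\tilde h_k$; this uses the decomposition $d(\pi_{k-1}\ast F_k)=\sum_i \xi_{i,k-1}\,d\nu_{i,k}$.

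The main obstacle is not the calculation but justifying that the simulation faithfully reproduces the hypotheses of Lemma~\ref{eq:tda2}. Specifically, the backlog $b$ and the execution time $c$ must be independent of each other and of the future level $k-1$ demand increments. Only then is the third step's law $\psi^{b+c}_{k-1}$ the correct conditional law of $\hat R_{k,1}$ under $\Pr^{b}_k$ given $C_{k,1}=c$. I would argue this from the independence of execution times (Remark~\ref{sec:indep}) and the independent increments of the Brownian motion $\hat W_{k-1}$ in Theorem~\ref{lem:heavy}.
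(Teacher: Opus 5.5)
Your proposal is correct and is essentially the paper's argument: the paper's proof invokes Lemma~\ref{coro:inverse}, that is, the mixture $\int \psi_{k-1}^{z}(t)\,d(\mu\ast F_k)(z)$ obtained by conditioning on the backlog and the execution time via Lemma~\ref{eq:tda2}, and the identification with $h_k^{max}$ and $\tilde h_k$ is exactly your use of $\mu_{k-1}\ast F_k=\mu_k$ and of the decomposition $d(\pi_{k-1}\ast F_k)=\sum_i \xi_{i,k-1}\,d\nu_{i,k}$ from Proposition~\ref{prop:backlog}. The independence issue you flag at the end is not needed for the generated variable, whose third step samples directly from $\psi_{k-1}^{b+c}$; it matters only if one wants to identify $R_k$ with $\hat R_{k,1}$ under $\Pr_k^{\mu}$, which is how the paper phrases its proof.
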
 
	
	\begin{proof}
		It is a consequence of Lemma~\ref{coro:inverse}. Alternatively, we can see that by construction \eqref{eq:steadystate2} and \eqref{eq:steadystate} come from conditional probabilities. The representation in Lemma~\ref{eq:tda2} indicates that the distribution of response times  is conditioned by the values of the backlog and the execution time. 
	\end{proof}

	\subsection{Experimental results}

	The purpose of this section is to illustrate that the response times generated from Proposition~\ref{prop:RTdistribution} provide a good approximation, by comparing distribution functions of simulations and its associated EVT estimation, and generated heavy-traffic response times distributions. Closer are the curves, better is the estimation. Those results are not exhaustive and are an illustration, we do not cover in this work the sensitivity of the model for different values of $\Ubar$.  In order to do so, and to compare the distribution of the generated response times with the ones provided by SimSo, we modified the SimSo tool \cite{cheramy2014} to simulate probabilistic  systems and compare the scheduled (SimSo) simulations and heavy-traffic response times simulations. We use the data generated by SimSo to apply EVT on response times (using the Scipy framework\footnote{\url{https://docs.scipy.org/doc/scipy/reference/generated/scipy.stats.gereme.html}} on Python). Finally we compare our results with SimSo simulations and EVT estimations. For simplicity the periods are deterministic in these simulations.
	
	In order to illustrate the stability described in this paper, we use the task set $\Gamma$ provided in Table~\ref{tab:taskset}, with $\mathcal{C}_k$ being the set of possible values for the execution time $C_k$ and $p_k=dF_k/dx$ its probability function. The level $2$ maximum utilization is smaller than $1$, hence backlogs of level $2$ converge quickly to $0$, see \figurename~\ref{fig:stationary1}. The level $3$ maximum utilization is greater than $1$ and the level $3$ mean utilization is smaller than $1$. Thus, $\tau_3$ is the task of interest, see \figurename~\ref{fig:backlog2} and \figurename~\ref{fig:stationary2}. The level $4$ mean utilization is close to $1$, hence \figurename~\ref{fig:backlog3} and \figurename~\ref{fig:stationary3} illustrate the behavior of backlogs when the utilization approaches its phase transition. The level $5$ mean utilization is greater than $1$, which illustrates the \textit{explosion} (infinite response times) of the system, see \figurename~\ref{fig:backlog4} and \figurename~\ref{fig:stationary4}.
	
	In \figurename~\ref{fig:demand} we observe that the demand $W(t)$ follows the line $\Ubar t$ (its mean). In \figurename~\ref{fig:demand} and \ref{fig:backlog} we see what happens when the system mean utilization is smaller, close and greater than $1$ : for smaller values of $\Ubar$ the system stays with zero backlog at some point, see \figurename~\ref{fig:backlog2}, but for values greater than $1$ the system explodes, see \figurename~\ref{fig:backlog4} and \ref{fig:stationary4}. In \figurename~\ref{fig:backlog3} and \ref{fig:stationary3} we see that even for $\Ubar$ close to $1$, the system always admits finite idle times. Most importantly, we see in \figurename~\ref{fig:backlog2} and \ref{fig:stationary2} that when $\Ubar < 1$ and $\Umax > 1$, the analysis holds and provides quantifiable response times.
	
	In \figurename~\ref{fig:relaxation}, we see to what corresponds idle times and $\epsilon$-idle times graphically. In \figurename~\ref{fig:mean_response_time} the average idle time corresponds to the point where the line $t$ and $\Ubar t$ meet, and the distribution of the idle times correspond to the frequency the demand process meets the line $t$ for each instant $t > 0$.
	
	Finally, in \figurename~\ref{fig:sim_response_time}, we can see the simulations presented in Section~\ref{sec:sim} and a comparison with response times simulated via SimSo \cite{cheramy2014}. The two \textit{ground truth} samples are the two subsets \textit{SimSo-transient} and \textit{SimSo-steady}, which are composed respectively of simulated response time released before and after the maximum $\epsilon$-idle time $t_{idle}^{max}(\epsilon)=154$ where $\epsilon=10^{-6}$. The maximum idle time $t_{idle}^{max}(\epsilon)$ is computed via Monte-Carlo approximations with the representation in \eqref{eq:trelmax}. 

	The EVT estimation (green curve) from these SimSo simulations and the steady-state and worst-case response times suggested in this paper are compared via their distribution functions in \figurename~\ref{fig:rt_cdf}. In this case, where $\Ubar$ is not too close to $1$ and $\Umax$ greater than $1$, we can see that the \textit{Worst-case} response times, the \textit{EVT} estimation and the steady-state response times are greater (in the stochastic sense \cite{diaz2004pessimism}) than the \textit{true} response times simulated with SimSo. The heavy-traffic \textit{Worst-case} response time seems to be an upper-bound in practice, and the \textit{Steady-state} response time is quite accurate.

	Closer is the \textit{Steady-state} to the \textit{SimSo-steady} curve, more accurate the approximation is. We can see a big difference between the \textit{Worst-case} curve and the \textit{Steady-state} curve. However the proposed analysis does not permit to quantify analytically this difference.

	\begin{table}[t]
		\centering
		\caption{Task set used in the simulations of the experimental results.}
		\begin{tabular}{|c||c|c|c|c|c|}
			\hline
			task $\tau_k$ &  $T_k$ & $\mathcal{C}_k$ & $p_k$  & $\Ubar_{k}$ & $\Umax_{k}$\\
			\hline
			\hline
			$\tau_1$ & $4$ & $(1, 2)$ & $(0.5, 0.5)$  & $0.375$ & $0.5$\\
			\hline
			$\tau_2$ & $6$ & $(1, 2)$ & $(0.5, 0.5)$ &  $0.625$ & $0.833$\\
			\hline
			$\tau_3$ & $8$ & $(1, 2, 3)$ & $(0.5, 0.3, 0.2)$  &   $0.838$ & $1.208$\\
			\hline
			$\tau_4$ & $10$ & $(1, 2, 3)$ & $(0.6, 0.2, 0.2)$ & $0.998$ & $1.508$\\
			\hline
			$\tau_5$ & $12$ & $(1, 2, 3, 4)$ & $(0.5, 0.3, 0.1, 0.1)$ &  $1.148$ & $1.841$\\
			\hline
		\end{tabular}
		\label{tab:taskset}
	\end{table}

	\section{Future work}\label{sec:future}
	
	The results of this paper can be generalized if some assumptions of the model are relaxed. Indeed, we have seen that the steady-state only depends on inter-arrivals only through their mean value in the heavy-traffic analysis. This means that one could suppose inter-arrival independent, also known as \textit{Poisson arrivals}, and use the wide list of results on Poisson arrivals to extend the results of this paper.
	
	\subsection{Extension to dynamic-priority policies}
	
	The first step into dynamic scheduling, as for example Earliest Deadline First (EDF) is to study the processes $\hat{\beta}$ and $\hat{W}$ introduced in Section~\ref{sec:relaxation}. Indeed, for fixed-priority policies, those variables are simply the backlog and demand of the lowest priority level. However, for EDF, levels of priority need to be defined not only for tasks but for jobs. In \cite[Section 4.4]{diaz2002stochastic}, authors use the concept of \textit{ground jobs} which are jobs released at an instant where the system is idle, and as shown in the Loynes theorem, when $\Ubar < 1$, idle times are finite. This means that an analysis mixing the concept of ground jobs and idle times as defined in this paper can provide an extension of our results for dynamic-priority scheduling.
	
	\subsection{Extension to dependent and time-driven execution times} \label{sec:dependence}
	
	Considering time-driven execution times, \ie execution times with distribution functions changing over time, is possible. We used drifted Brownian motions to describe the demand. However, we used a model with constant coefficients. A more general class of drifted Brownian motion involve drift and deviation function, depending not only on the values of the demand but also on the the time interval they are being considered. This is a natural extension of the model presented in our work and leads to a more general type of real-time system that needs to be defined.
	
	Considering dependence between tasks is also possible. It would lead the analysis to multivariate Brownian motion which requires some new concepts but uses the same arguments we used in this work. The demand should describe not only priority levels but task levels, and cannot be superposed as we have done. Indeed, as a some properties are quite instinctive when independence is assumed (Wald's lemma and the law of large numbers for example), they need some workaround when this assumption is relaxed.
	
	\subsection{Extension to multi-core processors}
	
	The Loynes theorem also exists for the multi-core case \cite{baccelli2013elements}, which means that there is a way to apply the results we provide in this paper to a multi-core model, for a global fixed-priority scheduling policy, and a global dynamic-scheduling policy. Furthermore, for independent inter-arrivals, optimal algorithm already exist for priority assignment and processor assignment, \cf \cite{huang2015control, boxma1989workloads, van1995dynamic, baccelli2013elements, chen2001fundamentals}.
	
	\section{Conclusion}
	
	We have seen that real-time systems can reach steadiness over a finite and quantifiable amount of time, and that a necessary condition to assure this stability is that the mean utilization is smaller than $1$. In practice, systems with a mean utilization $\Ubar > 0.9$ have many deadline misses, which requires from system designers to quantify deadline miss probabilities carefully by using the distributions of response times provided in this paper. 
	No schedulability tests considering the steadiness of response times exist. This is a natural step in our opinion for the use of the analysis provided in this work. For example, the analysis we have presented in this paper is well suited for an application of a Monte-Carlo response time analysis \cite{bozhko2021monte} which has recently been proven efficient.
	
	We have expressed the probability function of response times in a specific family of distributions, the inverse Gaussian distribution, and provided a method to generate them. However, the distribution functions of execution time need to be known, but they are usually unknown. The methods built in this paper could be used in empirical and measurement-based methods, for example using clustering methods \cite{friebe2020identification,zagalo2020identification}. 

	\bibliography{biblio.bib} 
	\vspace{-1.2cm}

	\begin{IEEEbiographynophoto}{Kevin Zagalo} is a PhD. student under the supervision of Liliana Cucu-Grosjean and Avner Bar-hen in the Kopernic team at Inria Paris research institute, working on probabilistic real-time systems. After studying computer science at Université de Paris, he graduates a Master degree in \textit{Probabilities and Stochastic analysis} at Sorbonne Université.
	\end{IEEEbiographynophoto}
	\vspace{-1.2cm}
	\begin{IEEEbiographynophoto}{Yasmina Abdeddaïm} is an associate professor at Gustave Eiffel University (UGE) and a member of the Laboratoire d'Informatique Gaspard Monge research laboratory (LIGM). She is also an associate member of the Kopernic team at the Inria Paris research institute.% where she collaborates with the team on the development of probabilistic methods for the validation of real-time systems.
	\end{IEEEbiographynophoto}
	\vspace{-1.2cm}
	\begin{IEEEbiographynophoto}{Avner Bar-Hen} is a professor at the Conservatoire des Arts et Métiers (Cnam) in Paris, holding the chair of \textit{Statistics and Big Data}. He is currently working in the Kopernic team at the Inria Paris research institute on stochastic approaches for real-time systems.
	\end{IEEEbiographynophoto}
	\vspace{-1.2cm}
	\begin{IEEEbiographynophoto}{Liliana Cucu-Grosjean} is a research director at Inria and the head of the Kopernic team. She actively supports the academic and industrial use of probabilistic approaches for the design of mixed-criticality real-time systems.
    \end{IEEEbiographynophoto}%

\end{document}